  \providecommand\BibTeX{{%
    \normalfont B\kern-0.5em{\scshape i\kern-0.25em b}\kern-0.8em\TeX}}}
\DeclareMathOperator{\Diag}{Diag}
\DeclareMathOperator{\proj}{proj}
\DeclareMathOperator{\prox}{prox}
\newcommand{\R}{\mathbb R}
\newcommand{\1}{\mathbf 1}
\newcommand{\0}{\mathbf 0}
\DeclarePairedDelimiter\ceil{\lceil}{\rceil}
\DeclarePairedDelimiterX\inner[2]{\langle}{\rangle}{{#1},{#2}}
\DeclarePairedDelimiter\norm{\|}{\|}
\DeclarePairedDelimiter\set{\{}{\}}
\DeclarePairedDelimiter\prn{(}{)}
\DeclarePairedDelimiterX\Set[2]{\{}{\}}{\mspace{2mu}{#1}:{#2}\mspace{2mu}}
\newcommand{\argmax}{\mathop{\rm argmax}}
\newcommand{\argmin}{\mathop{\rm argmin}}
\newtheorem{problem}{Problem}
\begin{document}

\fancyhead{}
%%
%% The "title" command has an optional parameter,
%% allowing the author to define a "short title" to be used in page headers.
%\title{Opinion Dynamics with Limited Changes of Susceptibility to Persuasion}
%\title{Opinion Dynamics with Limited Changes of Susceptibility to Persuasion}
\title{A Projected Gradient Method for Opinion Optimization with Limited Changes of Susceptibility to Persuasion}

%%
%% The "author" command and its associated commands are used to define
%% the authors and their affiliations.
%% Of note is the shared affiliation of the first two authors, and the
%% "authornote" and "authornotemark" commands
%% used to denote shared contribution to the research.

%\author{Anonymous Author(s)}

\author{Naoki Marumo}
\affiliation{%
  %\institution{University of Tokyo \& \\ NTT Communication Science Laboratories}
  \institution{University of Tokyo \& NTT CS Labs.}
  %\institution{University of Tokyo \& NTT Communication Science Laboratories}
  \city{Tokyo}
  \country{Japan}
}
\email{marumo-naoki@g.ecc.u-tokyo.ac.jp}

\author{Atsushi Miyauchi}
\affiliation{%
  \institution{University of Tokyo}
  \city{Tokyo}
  \country{Japan}
}
\email{miyauchi@mist.i.u-tokyo.ac.jp}

\author{Akiko Takeda}
\affiliation{%
  \institution{University of Tokyo \& RIKEN AIP}
  \city{Tokyo}
  \country{Japan}
}
\email{takeda@mist.i.u-tokyo.ac.jp}

\author{Akira Tanaka}
\affiliation{%
  \institution{University of Tokyo}
  \city{Tokyo}
  \country{Japan}
}
\email{akira@cl.rcast.u-tokyo.ac.jp}

%
%%%
%%% By default, the full list of authors will be used in the page
%%% headers. Often, this list is too long, and will overlap
%%% other information printed in the page headers. This command allows
%%% the author to define a more concise list
%%% of authors' names for this purpose.
%\renewcommand{\shortauthors}{Trovato and Tobin, et al.}

%%
%% The abstract is a short summary of the work to be presented in the
%% article.
\begin{abstract}
Many social phenomena are triggered by public opinion that is formed in the process of opinion exchange among individuals. To date, from the engineering point of view, a large body of work has been devoted to studying how to manipulate individual opinions so as to guide public opinion towards the desired state. Recently, Abebe et al. (KDD 2018) have initiated the study of the impact of interventions at the level of susceptibility rather than the interventions that directly modify individual opinions themselves. For the model, Chan et al. (The Web Conference 2019) designed a local search algorithm to find an optimal solution in polynomial time. However, it can be seen that the solution obtained by solving the above model might not be implemented in real-world scenarios. In fact, as we do not consider the amount of changes of the susceptibility, it would be too costly to change the susceptibility values for agents based on the solution. 

In this paper, we study an opinion optimization model that is able to limit the amount of changes of the susceptibility in various forms. First we introduce a novel opinion optimization model, where the initial susceptibility values are given as additional input and the feasible region is defined using the $\ell_p$-ball centered at the initial susceptibility vector. For the proposed model, we design a projected gradient method that is applicable to the case where there are millions of agents. Finally we conduct thorough experiments using a variety of real-world social networks and demonstrate that the proposed algorithm outperforms baseline methods. 
\end{abstract}

%%
%% The code below is generated by the tool at http://dl.acm.org/ccs.cfm.
%% Please copy and paste the code instead of the example below.
%%
\begin{CCSXML}
<ccs2012>
<concept>
<concept_id>10003752.10003809.10003635</concept_id>
<concept_desc>Theory of computation~Graph algorithms analysis</concept_desc>
<concept_significance>500</concept_significance>
</concept>
<concept>
<concept_id>10003752.10003809.10003716.10011138.10011140</concept_id>
<concept_desc>Theory of computation~Nonconvex optimization</concept_desc>
<concept_significance>500</concept_significance>
</concept>
</ccs2012>
\end{CCSXML}

\ccsdesc[500]{Theory of computation~Graph algorithms analysis}
\ccsdesc[500]{Theory of computation~Nonconvex optimization}

%%
%% Keywords. The author(s) should pick words that accurately describe
%% the work being presented. Separate the keywords with commas.
\keywords{social networks, opinion dynamics, susceptibility to persuasion, projected gradient methods, non-convex optimization}

%% A "teaser" image appears between the author and affiliation
%% information and the body of the document, and typically spans the
%% page.
%\begin{teaserfigure}
%  \includegraphics[width=\textwidth]{sampleteaser}
%  \caption{Seattle Mariners at Spring Training, 2010.}
%  \Description{Enjoying the baseball game from the third-base
%  seats. Ichiro Suzuki preparing to bat.}
%  \label{fig:teaser}
%\end{teaserfigure}

%%
%% This command processes the author and affiliation and title
%% information and builds the first part of the formatted document.
\maketitle

\section{Introduction}\label{sec:intro}

Many social phenomena are triggered by public opinion that is formed in the process of opinion exchange among individuals.  
Opinion dynamics, a mathematical model that captures such a fusion process, has played a key role in social psychology and related fields~\cite{Acemoglu+11,Dong+18,Hegselmann+02}. 
To date, from the engineering point of view, a large body of work has been devoted to studying how to manipulate individual opinions so as to guide public opinion towards the desired state~\cite{Kempe+03,Domingos+01,Gionis+13}; 
most existing studies consider interventions that directly act on individual opinions (see e.g., \cite{Kempe+03,Domingos+01,Gionis+13} and references therein). 

Recently, Abebe et al.~\cite{Abebe+18} have initiated the study of the impact of interventions at the level of susceptibility rather than the interventions that directly modify the opinions themselves. 
Their model is inspired by DeGroot~\cite{DeGroot74} and Friedkin and Johnsen~\cite{Friedkin+90}. 
Let $G=(V,E)$ be a simple undirected graph that represents interactions among agents. 
Each agent $i\in V$ has an innate opinion $s_i\in [0,1]$, where a higher $s_i$ value means that the agent $i$ has a more favorable opinion towards a given topic. 
The susceptibility to persuasion of agents are captured by resistance values $\alpha_i\in (0,1]$, 
where a higher $\alpha_i$ value implies that the agent $i$ is less susceptible to changing his or her opinion. 
The opinion dynamics based on the innate opinions and the resistance values evolve in a discrete-time fashion 
according to the interactions defined by the graph $G$. 
The final opinions of the agents are determined by computing the equilibrium state of the above dynamics. 
The goal of their model is to optimize (minimize or maximize) the sum of final opinions of the agents 
by appropriately determining a resistance value $\alpha_i$ in the interval $[l_i,u_i]$ for each agent $i\in V$. 
%The goal of their model is to find a resistance value $\bm{\alpha}=(\alpha_i)_{i\in V}$ in the interval $[l_i,u_i]$ ($i\in V$) that optimize (minimize or maximize) the sum of final opinions of the agents. 

For the above model (especially the minimization version), 
Abebe et al.~\cite{Abebe+18} have attempted to give a polynomial-time exact algorithm by exploiting the convexity of the objective function. 
Later Chan et al.~\cite{Chan+19} pointed out that the objective function is neither convex nor concave 
and hence one cannot directly use a general convex programming approach to design an effective algorithm. 
To overcome this issue, Chan et al.~\cite{Chan+19} thoroughly analyzed the structural properties of the objective function. 
Then they showed that any optimal solution satisfies $\alpha_i = l_i$ or $\alpha_i=u_i$ for every $i\in V$, 
implying the combinatorial nature of the model. 
Moreover, they found the fact that any local optimum is also a global optimum, in spite of non-convexity, 
when defining the neighbor of a solution as the set of solutions generated by switching $\alpha_i$ from $l_i$ to $u_i$ or from $u_i$ to $l_i$. 
Based on this fact, they finally designed a local search algorithm (for a slight generalization of the model by Abebe et al.~\cite{Abebe+18}) to find an optimal solution in polynomial time. 

In the opinion optimization models studied in Abebe et al.~\cite{Chan+19} and Chan et al.~\cite{Chan+19}, 
we are allowed to determine the resistance values freely as long as they satisfy lower and upper bound constraints. 
The lower and upper bounds are useful to represent the personality of agents: 
For agents with less (resp. much) willingness to change their susceptibility, we can set tight (resp. loose) lower and upper bounds on the initial resistance values estimated. 
However, it should be emphasized that the solution obtained by solving the above model might not be implemented in real-world scenarios. 
In fact, as we do not consider the amount of changes of the susceptibility, it would be too costly to change the resistance values for agents based on the solution. 

In this paper, we study an opinion optimization model that is able to limit the amount of changes of the susceptibility in various forms.  
First we introduce a novel opinion optimization model, where the initial susceptibility values are given as additional input 
and the feasible region is defined using the $\ell_p$-ball ($p\geq 0$) centered at the initial susceptibility vector. 
Note that Abebe et al.~\cite{Abebe+18} also considered a budgeted variant of their model, which corresponds to our model with $p=0$. 
That is, one has the $\ell_0$-norm constraint, just allowing to modify the resistance values of at most $k$ agents. 
Abebe et al.~\cite{Abebe+18} proved that this problem is NP-hard, in contrast to the original problem, and gave a simple heuristic. 
Moreover, very recently, %(about 10 days before the CIKM 2021 abstract deadline), 
Chan and Lee~\cite{Chan+21} introduced a variant with the $\ell_1$-norm constraint, which is equivalent to ours with $p=1$, inspired by various real-world scenarios. 
They showed that this problem is NP-hard by constructing a sophisticated polynomial-time reduction from the vertex cover problem. 
However, there is still a lack of effective algorithms even for this special case. 

For our general model with $p\geq 1$, we design a projected gradient method that is applicable to the case where there are millions of agents. 
Note that as our model has the $\ell_p$-norm constraint over all agents, there is no longer the above-mentioned combinatorial nature, 
which suggests using such a continuous optimization approach. 
%Projected gradient method is an iterative method, which computes the next point based on the current point using the information of the gradient at the current point and the projection onto the feasible region~\cite{}. 
An ordinary projected gradient method requires the Lipschitz continuity of the gradient of the objective function over the entire space, 
i.e., $\mathbb{R}^V$ for our model. 
However, as will be shown later, our problem satisfies the Lipschitz continuity only over the feasible region. 
This gap can be filled in by employing Nesterov's variant of the projected gradient methods~\cite{nesterov2013gradient}. %, which gives the convergence rate of... 
%The difficulty of the design of the algorithm is that the gradient of the objectie function is not necessarily Lipschitz-continuous over the entire space, 
%To this end, we prove the Lipschitz-continuity of the gradient of the objective function over the feasible region. 

To design an efficient projected gradient method, we need to be able to compute the gradient of the objective function efficiently. 
However, as presented by Abebe et al.~\cite{Abebe+18} and Chan et al.~\cite{Chan+19} (and will be shown later), 
the direct computation of the gradient involves the matrix inverse, 
which leads to the cubic time complexity. 
To overcome this issue, we construct a reduction from the computation of the gradient to solving some linear systems, 
enabling us to reduce the time complexity drastically. 
It should be noted that the computation of the objective function value also involves the matrix inverse. 
Our technique is again applicable to it. 

Finally we conduct thorough computational experiments using a variety of real-world social networks 
and demonstrate that the proposed algorithm outperforms baseline methods.

\section{Related Work}\label{sec:related}

The opinion dynamics models studied in Abebe et al.~\cite{Abebe+18} and Chan et al.~\cite{Chan+19} 
are inspired by DeGroot~\cite{DeGroot74} and Friedkin and Johnsen~\cite{Friedkin+99}. 
The seminal work by DeGroot~\cite{DeGroot74} introduced a mathematical model of consensus reaching 
wherein there are agents with opinions on some topic and each agent's opinion is updated using the average value of his or her neighbors' opinions in a given social network. 
Friedkin and Johnsen~\cite{Friedkin+99} considered a generalization of DeGroot's model, 
where each agent has an additional attribute called the innate belief and each agent's opinion is determined by mixing it into the average value of its neighbors' opinions. 
These models are continuous model, where each agent has a continuous opinion. 
On the other hand, there also exist discrete models, as in Domingos and Richardson~\cite{Domingos+01} and Kempe et al.~\cite{Kempe+03}. 

Projected gradient method is a well-known iterative method for constrained continuous optimization problems, 
which is based on the strategy that computes the gradient of the objective function at the current solution, 
gets the next solution (not necessarily feasible) using the gradient, projects the solution onto the feasible region, 
and then repeats these procedures until some termination criterion is met.
The main computation cost is in computing the gradient and the projection operator.
Various examples of the projection operator that can be easily derived, for example, by the closed-form, are shown in Beck~\cite{beck2017first}.
As long as the objective function is possibly non-convex but smooth enough and the feasible region is convex, the convergence to an optimal solution
(resp. a stationary point) is ensured for the convex (resp. non-convex) objective function, and the
convergence speed is also evaluated.
%For convex optimization problems, projected gradient methods have widely been used~\cite{}. 
%Recently, projected gradient methods have been studied also in the context of non-convex optimization~\cite{}. 
%\memo{参考文献募集．あと，もう少し気の利いたことを書きたいです．}

Non-convex optimization is a key primitive in mathematical optimization theory, 
which has gained increasing attention recently in various fields.
Application examples include gene expression analysis, recommendation systems, clustering, and anomaly detection~\cite{jain2017nonconv}. 
In particular, in the context of machine learning, 
non-convex optimization methods are actively used for hyperparameter optimization of neural networks~\cite{HaoNIPS18,jain2017nonconv}.

\section{Model}\label{sec:model}

In this section, we first revisit the opinion optimization model considered by Chan et al.~\cite{Chan+19}, 
which is a slight generalization of the model proposed by Abebe et al.~\cite{Abebe+18}, 
and then formally describe the model that we will study. 

Let $V$ be the (finite) set of agents whose innate opinions are given by a vector $\bm{s}\in [0,1]^V$, 
where a higher $s_i$ value means that the agent $i$ has a more favorable opinion towards a given topic. 
The interactions among agents are captured by a row stochastic matrix $P\in [0,1]^{V\times V}$, 
that is, each element of $P$ is nonnegative and each row-sum equals one. 
As in Chan et al.~\cite{Chan+19}, we call this matrix $P$ the interaction matrix. 

Note that in the model proposed by Abebe et al.~\cite{Abebe+18}, we are given a simple undirected graph $G=(V,E)$ 
that represents interactions among agents, rather than the interaction matrix itself. %we can construct the interaction matrix as follows: 
Then they constructed the interaction matrix as follows: 
For any $(i,j)\in V\times V$, set $P_{ij}=1/\text{deg}(i)$ if there exists $\{i,j\}\in E$ and $P_{ij}=0$ otherwise, where $\text{deg}(i)$ denotes the degree of $i\in V$. 
Therefore, we see that the model by Chan et al.~\cite{Chan+19} is a generalization of the model by Abebe et al.~\cite{Abebe+18}. 

The susceptibility to persuasion of agents are represented by the resistance vector $\bm{\alpha}\in (0,1]^V$, 
where a higher $\alpha_i$ value implies that the agent $i$ is less susceptible to changing his or her opinion. 

The opinion dynamics evolve in a discrete-time fashion. 
Specifically, the opinion vector $\bm{z}^{(t+1)}$ at time $t+1$ ($t=0,1,\dots$) is determined using $\bm{z}^{(t)}$ as follows: 
\begin{align}
  \bm{z}^{(0)}= \bm{s},\quad \bm z^{(t+1)}= \Diag(\bm \alpha) \bm s + \Diag(\1 - \bm \alpha) P \bm z^{(t)},
  % &= \bm \alpha \odot \bm s + (\1 - \bm \alpha) \odot (P \bm z^{(t)}),\\
\end{align}
%\begin{align}
%  \bm{z}^{(0)}
%  &= \bm{s},\\
%  \bm z^{(t+1)}
%  &= \Diag(\bm \alpha) \bm s + \Diag(\1 - \bm \alpha) P \bm z^{(t)},
%  % &= \bm \alpha \odot \bm s + (\1 - \bm \alpha) \odot (P \bm z^{(t)}),\\
%\end{align}
where $\Diag(\bm \alpha)$ is the diagonal matrix with $(\Diag(\bm \alpha))_{ii}=\alpha_i$. 
It is easy to see that the equilibrium opinion vector is given by 
\begin{align}
  \bm z^{(\infty)}
  = (I - \Diag(\1 - \bm \alpha) P)^{-1} \Diag(\bm \alpha) \bm s,
  \label{eq:def_of_z}
\end{align}
which exists if $\alpha_i > 0$ for every $i\in V$. 
% and $I$ is the identity matrix
Therefore, the sum of equilibrium opinions (as a function of the resistance vector) is given by 
\begin{align}
  f(\bm{\alpha})
  \coloneqq \bm{1}^\top \bm z^{(\infty)}
  = \bm{1}^\top (I - \Diag(\1 - \bm \alpha) P)^{-1} \Diag(\bm \alpha) \bm s. 
\end{align}

In the opinion optimization model considered by Chan et al.~\cite{Chan+19}, we are asked to find a resistance vector $\bm{\alpha}\in \mathbb{R}^V$ with $\bm{l}\leq \bm{\alpha}\leq \bm{u}$ that minimizes $f(\bm{\alpha})$.
Here, the inequality constraints are elementwise, and $\bm{l}$ and $\bm{u}$ are lower and upper bound vectors, respectively, that satisfy $\bm{0}<\bm{l}\leq \bm{u}\leq \bm{1}$.
For convenience, throughout this paper, we write
\begin{align}
[\bm{l},\bm{u}]
\coloneqq \{\bm{\alpha}\in \mathbb{R}^V: \bm{l}\leq \bm{\alpha}\leq \bm{u}\}. 
\end{align}
Note that one can also consider the maximization variant of the problem. 
However, Chan et al.~\cite{Chan+19} decided that they focus only on the minimization version because techniques needed to analyze the maximization variant are quite similar. 

Chan et al.~\cite{Chan+19} analyzed structural properties of the objective function. 
They first showed that the function is neither convex nor concave, contrary to the claim by Abebe et al.~\cite{Abebe+18} that the function is convex. 
On the other hand, they found the fact that any local optimum is also a global optimum, in spite of its non-convexity. 
Using this fact, they designed a local search algorithm to find an optimal solution in polynomial time. 

In the opinion optimization models considered by Abebe et al.~\cite{Abebe+18} and Chan et al.~\cite{Chan+19}, 
we are allowed to determine the resistance vector $\bm{\alpha}\in (0,1]^{V}$ freely 
as long as it satisfies lower and upper bound constraints, i.e., $\bm{\alpha}\in [\bm{l}, \bm{u}]$. 
The lower and upper bound vectors are useful to represent the personality of agents: 
For agents with less (resp. much) willingness to change their susceptibility, we can set tight (resp. loose) lower and upper bounds 
on the initial resistance values estimated. 
However, it should be emphasized that the solution obtained by solving the above model might not be implemented in real-world scenarios. 
In fact, as we do not consider the amount of changes of the susceptibility, it would be too costly to change the resistance values for agents based on the solution. 

To overcome this issue, we introduce an opinion optimization model that is able to limit the amount of changes of the susceptibility in various forms. 
Specifically, our model can be defined as follows: 
\begin{problem}[Opinion optimization with $\ell_p$-norm constraint]
Let $p\geq 0$. 
Given the set $V$ of agents, the innate opinion vector $\bm{s}\in [0,1]^{V}$, the interaction matrix $P\in [0,1]^{V\times V}$, 
the initial resistance vector $\bm{\alpha}^\mathrm{init}$ and its lower and upper bound vectors $\bm{l}$ and $\bm{u}$, respectively,  
such that $\bm{0}<\bm{l}\leq \bm{\bm{\alpha}}^\mathrm{init}\leq \bm{u}\leq \bm{1}$, and $k\in \mathbb{R}_{\geq 0}$, 
we are asked to find a resistance vector $\bm{\alpha}\in [\bm{l}, \bm{u}]$ that minimizes $f(\bm{\alpha})$ under the constraint that $\|\bm{\alpha}-\bm{\alpha}^\mathrm{init}\|_p\leq k$. 

For reference, we give its mathematical programming formulation: 
\begin{alignat}{3}
&\mathrm{(P)}&\ \    &\mathrm{minimize}_{\bm{\alpha}\in \mathbb{R}^V} &\ \ &f(\bm{\alpha}) = \bm{1}^\top (I - \Diag(\1 - \bm \alpha) P)^{-1} \Diag(\bm \alpha) \bm s\\
&&   &\mathrm{subject\ to} &  &\|\bm{\alpha}-\bm{\alpha}^\mathrm{init}\|_p\leq k,\quad \bm{\alpha}\in [\bm{l},\bm{u}].
\end{alignat}
We denote by $\mathcal{C}$ the feasible region of the problem, i.e., 
\begin{align}
\mathcal{C} \coloneqq \{\bm{\alpha}\in \mathbb{R}^V : \|\bm{\alpha}-\bm{\alpha}^\mathrm{init}\|_p\leq k, \ \bm{\alpha}\in [\bm{l},\bm{u}]\}. 
\end{align}
\end{problem}

As in Chan et al.~\cite{Chan+19}, we can also consider the maximization variant of our model. 
However, consistent with them, we focus only on the minimization version. 
Our technical contributions can easily be duplicated for the maximization variant. 

Note that Abebe et al.~\cite{Abebe+18} also considered a budgeted variant of their model, which corresponds to our model with $p=0$.
That is, we have the constraint $\|\bm{\alpha}-\bm{\alpha}^\mathrm{init}\|_0= |\{i\in V: \alpha_i\neq \alpha^\mathrm{init}_i\} |\leq k$, 
just allowing to modify the resistance values of at most $k$ agents. 
Abebe et al.~\cite{Abebe+18} proved that this problem is NP-hard and gave a simple heuristic.
Moreover, very recently, 
Chan and Lee~\cite{Chan+21} introduced a variant with the $\ell_1$-norm constraint, i.e., $\|\bm{\alpha}-\bm{\alpha}^\mathrm{init}\|_1=\sum_{i\in V}|\alpha_i-\alpha^\text{init}_i|\leq k$, 
which is equivalent to ours with $p=1$. 
They showed that this problem is NP-hard by constructing a polynomial-time reduction from the vertex cover problem.
However, there is still a lack of effective algorithms even for this special case.

\section{Algorithm}\label{sec:algorithm}
%In this section, we propose to apply a projected gradient method to Problem~(P). 
In this section, we propose a projected gradient method for Problem~(P) with $p\geq 1$. 

\subsection{Projected Gradient Method}

Our method, which we describe formally as Algorithm~\ref{alg:PGM},
iteratively executes two operations: calculation of the gradient vector and projection calculation onto the feasible region
until a termination criterion is satisfied. 
The two operations are executed in \Cref{line:update_alpha}, where after $\bm \alpha^t - \eta \nabla f(\bm \alpha^t)$ is computed with some stepsize $\eta$,
it is projected onto $\mathcal{C}$ using 
the projection operator $\proj_{\mathcal C}$.
Here the projection operator $\proj_{\mathcal S}: \R^V \to \R^V$
for a closed convex set $\mathcal S \subseteq \R^V$ is defined by
\begin{align}
  \proj_{\mathcal S}(\bm \alpha)
  \coloneqq \argmin_{\bm \beta \in \mathcal S} \|\bm \beta - \bm \alpha\|_2.
\end{align}
It should be noted that $\bm \alpha^t \in \mathcal C$  because of the update rule in \Cref{line:update_alpha} and
the objective value $f(\bm \alpha^t)$ strictly decreases in every iteration because of \Cref{line:sufficient_decrease_cond}.
The key computations in \Cref{alg:PGM} are $\nabla f(\bm \alpha)$ and $\proj_{\mathcal C}(\bm \alpha)$ for a given $\bm \alpha$.
We will discuss how to efficiently compute them in \Cref{sec:eval_gradient,sec:projection}, respectively.

This algorithm employs a backtracking rule to select a suitable stepsize $\eta$ by \Cref{line:dec_eta}.
A constant stepsize $\eta = 1/L$ can be used for the method 
with the Lipschitz constant\footnote{
  Since there exists the Hessian matrix of $f$ on $\mathcal C$
  and the set $\mathcal C$ is closed and bounded,
  $\nabla f$ is Lipschitz continuous on $\mathcal C$.
}
$L$ of $\nabla f$,
but it is well known that the backtracking rule  performs better by finding a larger stepsize than the constant one in most cases.

% In \Cref{alg:PGM}, the stepsize $\eta$ may become smaller but never increases. 
% As a heuristic approach, one can increase the value of $\eta$ in the end of each iteration of the for-loop, 
% which mitigates the effect of the initial stepsize on the convergence speed and makes the hyperparameter setting in the algorithm easier. 
% In the implementation used in our experiments, we employed such an approach. 

In \Cref{alg:PGM}, the stepsize $\eta$ may decrease but never increases.
For practical speedup of the algorithm,
we can introduce a heuristic technique that increases $\eta$ at the end of each iteration of \Cref{line:outer_loop},
i.e., insert the step ``$\eta \gets \gamma_{\mathrm{inc}} \eta$'' for $\gamma_{\mathrm{inc}} > 1$ just after \Cref{line:return_ans}.
This also mitigates the effect of the initial stepsize $\eta_0$ on the convergence speed
and makes it easier to tune the hyperparameter $\eta_0$.
In the implementation used in our experiments, we employed such a technique.

The termination criterion of Algorithm~\ref{alg:PGM} is described in \Cref{line:termination_cond}.
The condition is equivalent to $\norm{\mathcal G_\eta(\bm \alpha^t)} \leq \epsilon$, where
 the gradient mapping $\mathcal G_\eta: \R^V \to \R^V$ of $f$ on $\mathcal C$ is defined with $\eta > 0$  by 
\begin{align}
  \mathcal G_\eta(\bm \alpha) \coloneqq \frac{1}{\eta} (\bm \alpha - \bm \alpha^+),
  \ \ \text{where}\ \ 
  \bm \alpha^+ \coloneqq \proj_{\mathcal C}(\bm \alpha - \eta \nabla f(\bm \alpha)).
\end{align}
Therefore, the output $\bm \alpha \in \mathcal C$ of \Cref{alg:PGM} is guaranteed to satisfy $\norm{\mathcal G_\eta(\bm \alpha)} \leq \epsilon$
if the algorithm terminates. The gradient mapping can be regarded as the generalized gradient for constrained problems; indeed,
it is known that $\mathcal G_\eta(\widehat{\bm \alpha}) = \0$ if and only if $\widehat{\bm \alpha}$ is a stationary point of Problem (P).
This statement holds for any $\eta > 0$.
Hence, the gradient mapping 
is a popular measure used for termination criteria of the first-order methods, including the projected gradient method.
%We will discuss in the next subsection on the theoretical properties of Algorithm~\ref{alg:PGM} such as the finite termination and the worst-case iteration complexity.

\begin{algorithm}[t]
  \caption{Projected gradient method with backtracking}
  \label{alg:PGM}
    \SetKwInOut{Input}{Input}
    \SetKwInOut{Output}{Output}
    \Input{\ 
      $\bm \alpha^0 \in \mathcal C$, \ 
      $\eta_{0} > 0$, \ 
      $\gamma < 1$, \ 
      % $\gamma_{\mathrm{inc}} \geq 1$, \ 
      $\epsilon > 0$
    }
    \Output{\ 
      $\bm \alpha \in \mathcal C$ such that $\norm{\mathcal G_\eta(\bm \alpha)} \leq \epsilon$ for some $\eta \in (0, \eta_{0}]$\hspace{-3em}
    }
    $\eta \gets \eta_{0}$\;
    \For{
      $t = 1,2,\dots$
      \label{line:outer_loop}
    }{
      \While{
        \textnormal{TRUE}
        \label{line:inner_loop}
      }{
        $\bm \alpha^{t} \gets \proj_{\mathcal C}(\bm \alpha^{t-1} - \eta \nabla f(\bm \alpha^{t-1}))$ \label{line:update_alpha}\;
        \tcp{$\proj_{\mathcal C}$ is computed by \Cref{alg:projection}.}
        \If{$f(\bm \alpha^{t}) \leq f(\bm \alpha^{t-1}) - \frac{1}{2\eta}\|\bm \alpha^{t} - \bm \alpha^{t-1}\|_2^2$ \label{line:sufficient_decrease_cond}}{
          \textbf{break}\label{line:success_break}\; 
        }
        \Else{
        $\eta \gets \gamma \eta$ \label{line:dec_eta}\;
        }
      }
      % \EndLoop
      \If{$\norm{\bm \alpha^{t} - \bm \alpha^{t-1}}_2 \leq \eta \epsilon$ \label{line:termination_cond}}{
        \Return $\bm \alpha^{t-1}$\label{line:return_ans}\;
      }
      % $\eta \gets \gamma_{\mathrm{inc}} \eta$
      % \tcp*{This line is optional.}
    }
\end{algorithm}

\subsection{Finite Termination and Iteration Complexity}
%\memo{「この節では〜」みたいな文を入れる？　前の節の最後の文と被る？}
Here we discuss theoretical properties of Algorithm~\ref{alg:PGM} such as the finite termination and the worst-case iteration complexity.
%Results similar to those in this section
Similar theoretical guarantees
can be found in literature~\cite{beck2017first,nesterov2013gradient},
but to the best of our knowledge, there are no proofs with the same assumptions and backtracking rule as ours.
Therefore, for self-containedness, we provide complete proofs of our results.
%\memo{←上手く書けなかったので必要ならば修正お願いします．}

Let $L$ be the Lipschitz constant of $\nabla f$ on $\mathcal C$. We have the following lemma: 
\begin{lemma}
  \label{lem:unsuccessful}
  In \Cref{alg:PGM},
  \begin{itemize}
    \item
    the parameter $\eta$ always satisfies $\eta \geq \min \set{\eta_0, \gamma/L}$, and
    \item 
    \Cref{line:dec_eta} is executed at most $\max\set{ \ceil{\log_{1/\gamma}(\eta_0 L)} , 0}$ times.
  \end{itemize}
  %where $\ceil{a}$ denotes the smallest integer greater than or equal to $a \in \R$.
\end{lemma}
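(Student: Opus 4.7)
The plan is to prove the single key inequality that whenever the current stepsize satisfies $\eta \leq 1/L$, the sufficient decrease condition in \Cref{line:sufficient_decrease_cond} automatically holds. Both bullets of the lemma then follow immediately: \Cref{line:dec_eta} can fire only when $\eta > 1/L$, so after firing $\eta$ is still at least $\gamma/L$, and the total number of firings is controlled by how many times $\eta_0$ can be multiplied by $\gamma < 1$ before falling to $1/L$.

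This key inequality rests on two standard ingredients. First, because $\mathcal C$ is convex (an intersection of a box with an $\ell_p$-ball, $p \geq 1$) and $\nabla f$ is $L$-Lipschitz on $\mathcal C$, the descent lemma holds on $\mathcal C$: for every $\bm\alpha, \bm\beta \in \mathcal C$,
\begin{equation*}
f(\bm\beta) \leq f(\bm\alpha) + \inner{\nabla f(\bm\alpha)}{\bm\beta - \bm\alpha} + \frac{L}{2}\|\bm\beta - \bm\alpha\|_2^2.
\end{equation*}
Second, since $\bm\alpha^{t} = \proj_{\mathcal C}(\bm\alpha^{t-1} - \eta\nabla f(\bm\alpha^{t-1}))$, the first-order optimality condition for the projection, evaluated at the feasible test point $\bm\alpha^{t-1} \in \mathcal C$, gives
\begin{equation*}
\inner{\nabla f(\bm\alpha^{t-1})}{\bm\alpha^{t} - \bm\alpha^{t-1}} \leq -\tfrac{1}{\eta}\|\bm\alpha^{t} - \bm\alpha^{t-1}\|_2^2.
\end{equation*}
Plugging the second into the first with $\bm\alpha = \bm\alpha^{t-1}$ and $\bm\beta = \bm\alpha^t$ yields $f(\bm\alpha^t) \leq f(\bm\alpha^{t-1}) - (1/\eta - L/2)\|\bm\alpha^t - \bm\alpha^{t-1}\|_2^2$, which dominates the required bound $-\tfrac{1}{2\eta}\|\bm\alpha^t - \bm\alpha^{t-1}\|_2^2$ precisely when $\eta \leq 1/L$.

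Granted this key inequality, the first bullet follows by tracking two simple invariants: $\eta$ is initialized to $\eta_0$ and never increases in \Cref{alg:PGM}, and any invocation of \Cref{line:dec_eta} requires the precondition $\eta > 1/L$, so the post-value $\gamma\eta$ exceeds $\gamma/L$; hence $\eta \geq \min\set{\eta_0,\gamma/L}$ at all times. For the second bullet, each invocation multiplies $\eta$ by $\gamma$ and no invocation is possible once $\eta \leq 1/L$, so the total count is at most the smallest nonnegative integer $k$ with $\gamma^k \eta_0 \leq 1/L$, which is $\ceil{\log_{1/\gamma}(\eta_0 L)}$ when $\eta_0 L > 1$ and $0$ when $\eta_0 L \leq 1$, matching $\max\set{\ceil{\log_{1/\gamma}(\eta_0 L)},0}$.

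The only subtle point is justifying the descent lemma on $\mathcal C$ rather than on all of $\R^V$: convexity of $\mathcal C$ ensures the line segment between any two feasible points stays in $\mathcal C$, so the usual derivation via $f(\bm\beta) - f(\bm\alpha) = \int_0^1 \inner{\nabla f(\bm\alpha + t(\bm\beta - \bm\alpha))}{\bm\beta - \bm\alpha}\,\mathrm{d}t$ combined with $L$-Lipschitzness on $\mathcal C$ transfers without modification. Beyond this verification, the argument is purely algebraic and poses no real obstacle.
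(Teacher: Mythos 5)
Your proof is correct and follows essentially the same route as the paper's: establish that the sufficient-decrease test in \Cref{line:sufficient_decrease_cond} cannot fail once $\eta \leq 1/L$, and then derive both bullets by counting how many multiplications by $\gamma$ are possible before $\eta$ drops to $1/L$. The only difference is that you prove this key inequality from scratch via the descent lemma and the projection's variational inequality, whereas the paper simply cites Equations (2.17) and (2.20) of Nesterov's paper for it; your version is therefore more self-contained but not a different argument.
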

\begin{proof}
  By Equations (2.17) and (2.20) in \cite{nesterov2013gradient},
  the condition in \Cref{line:sufficient_decrease_cond} of \Cref{alg:PGM} must hold if $\eta \leq 1/L$.
  Therefore, if $\eta_0 \leq 1/L$,
  \Cref{line:dec_eta} will not be executed and
  the parameter $\eta$ always satisfies $\eta = \eta_0$.
  Otherwise \Cref{line:dec_eta} may be executed but the number of executions does not exceed
  $\ceil{\log_{1/\gamma}(\eta_0 L)}$, and $\eta$ always satisfies $\eta > \gamma / L$.
  This completes the proof. %This implies the results.
\end{proof}

The second result of the above lemma implies that
the loop of \Cref{line:inner_loop} in \Cref{alg:PGM} terminates after a finite number of iterations.

\begin{theorem}\label{theo:iter}
  Let $\bm \alpha^* \in \mathcal C$ be an optimal solution to Problem (P).
  \begin{itemize}
    \item 
    \Cref{alg:PGM} terminates after at most
    \[
      \ceil[\bigg]{ \frac{2 ( f(\bm \alpha^0) - f(\bm \alpha^*) )}{ \epsilon^2 \min \set{\eta_0, \gamma/L}}}
      = O(\epsilon^{-2})
    \]
    iterations of \Cref{line:outer_loop}
    and outputs a point $\bm \alpha \in \mathcal C$ such that
    $\norm{\mathcal G_\eta(\bm \alpha)} \leq \epsilon$ for some $\eta \in (0, \eta_0]$.
    \item
    Let $\{\bm \alpha^t\}_{t= 0}^\infty$ be an infinite sequence generated by Algorithm~\ref{alg:PGM} with $\epsilon=0$.
    Then, any accumulation point of $\{\bm \alpha^t\}_{t= 0}^\infty$ is a stationary point of Problem (P).
  \end{itemize}
\end{theorem}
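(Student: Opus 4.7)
The plan is to run both parts off the sufficient-decrease inequality recorded in \Cref{line:sufficient_decrease_cond}, combined with the stepsize lower bound from \Cref{lem:unsuccessful}. Denote by $\eta_t$ the value of $\eta$ when the inner \textbf{while} loop breaks at outer iteration $t$; by construction $\eta$ is non-increasing and, by \Cref{lem:unsuccessful}, $\eta_t \geq \eta_{\min} \coloneqq \min\{\eta_0,\gamma/L\} > 0$. When the break occurs we have
\begin{align}
  f(\bm \alpha^{t-1}) - f(\bm \alpha^{t})
  \geq \frac{1}{2\eta_t}\,\|\bm \alpha^{t} - \bm \alpha^{t-1}\|_2^2,
\end{align}
and the identity $\mathcal{G}_{\eta_t}(\bm \alpha^{t-1}) = (\bm \alpha^{t-1}-\bm \alpha^{t})/\eta_t$ ties the termination test in \Cref{line:termination_cond} exactly to $\|\mathcal{G}_{\eta_t}(\bm \alpha^{t-1})\| \leq \epsilon$, which yields the guarantee stated in the theorem.

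For the first bullet, I would argue by contradiction: suppose \Cref{alg:PGM} does not terminate within the first $T$ iterations. Then for each $t = 1,\dots,T$ we have $\|\bm \alpha^{t} - \bm \alpha^{t-1}\|_2 > \eta_t \epsilon$, and plugging this into the sufficient decrease inequality gives
\begin{align}
  f(\bm \alpha^{t-1}) - f(\bm \alpha^{t})
  > \frac{\eta_t \epsilon^2}{2}
  \geq \frac{\eta_{\min}\, \epsilon^2}{2}.
\end{align}
Telescoping and using $f(\bm \alpha^T) \geq f(\bm \alpha^*)$ yields $T < 2(f(\bm \alpha^0) - f(\bm \alpha^*))/(\epsilon^2 \eta_{\min})$, which is the claimed $O(\epsilon^{-2})$ bound.

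For the second bullet, I would instead telescope without the termination assumption: the sufficient decrease condition gives
\begin{align}
  \sum_{t=1}^{\infty} \frac{1}{2\eta_t}\,\|\bm \alpha^{t} - \bm \alpha^{t-1}\|_2^2
  \leq f(\bm \alpha^0) - \inf_{\bm \alpha \in \mathcal C} f(\bm \alpha) < \infty,
\end{align}
and since $\eta_t \leq \eta_0$, this forces $\|\bm \alpha^{t} - \bm \alpha^{t-1}\|_2 \to 0$. Now let $\bar{\bm \alpha}$ be any accumulation point, with $\bm \alpha^{t_k} \to \bar{\bm \alpha}$. Because $\eta_{t_k} \in [\eta_{\min}, \eta_0]$ lies in a compact interval, I pass to a further subsequence (still indexed by $t_k$) along which $\eta_{t_k} \to \bar\eta \in [\eta_{\min}, \eta_0]$. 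Taking the limit in the update rule of \Cref{line:update_alpha}, together with $\bm \alpha^{t_k-1} \to \bar{\bm \alpha}$ (because $\|\bm \alpha^{t_k}-\bm \alpha^{t_k-1}\| \to 0$) and continuity of $\nabla f$ and $\proj_{\mathcal C}$, yields $\bar{\bm \alpha} = \proj_{\mathcal C}(\bar{\bm \alpha} - \bar\eta\, \nabla f(\bar{\bm \alpha}))$, i.e.\ $\mathcal G_{\bar\eta}(\bar{\bm \alpha}) = \0$, which is equivalent to $\bar{\bm \alpha}$ being a stationary point of Problem~(P).

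The main obstacle I anticipate is bookkeeping around the adaptive stepsize: one must verify that the $\eta_t$ used in each telescoping step is indeed the one that satisfies the sufficient-decrease condition, that $\eta_t$ stays bounded away from zero uniformly (supplied by \Cref{lem:unsuccessful}), and that when extracting a convergent subsequence of stepsizes in the second part, the limit $\bar\eta$ remains strictly positive so that $\mathcal G_{\bar\eta}(\bar{\bm \alpha}) = \0$ is a genuine stationarity condition and not a vacuous statement. Continuity of $\proj_{\mathcal C}$ over $\mathbb R^V$ (standard for projections onto closed convex sets) and of $\nabla f$ on a neighbourhood of $\mathcal C$ (Lipschitz by the footnote preceding \Cref{lem:unsuccessful}) are the other ingredients needed to close the limiting argument.
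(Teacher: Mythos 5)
Your proof is correct. The first bullet is essentially the paper's argument in contrapositive form: both telescope the sufficient-decrease inequality $f(\bm\alpha^{t-1})-f(\bm\alpha^t)\geq\frac{1}{2\eta_t}\|\bm\alpha^t-\bm\alpha^{t-1}\|_2^2$, lower-bound $\eta_t$ by $\min\{\eta_0,\gamma/L\}$ via \Cref{lem:unsuccessful}, and use the identity $\mathcal G_{\eta_t}(\bm\alpha^{t-1})=(\bm\alpha^{t-1}-\bm\alpha^t)/\eta_t$ to tie the termination test to the gradient-mapping norm. Where you genuinely diverge is the second bullet. The paper invokes the monotonicity of $\eta\mapsto\|\mathcal G_\eta(\bm\alpha)\|$ (Lemma 2 of Nesterov) to replace each $\mathcal G_{\eta_t}$ by the fixed $\mathcal G_{\eta_0}$, concluding $\sum_t\|\mathcal G_{\eta_0}(\bm\alpha^{t-1})\|^2<\infty$ and hence $\|\mathcal G_{\eta_0}(\bm\alpha^t)\|\to 0$ along the whole sequence, after which stationarity of accumulation points follows from continuity of $\mathcal G_{\eta_0}$. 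You instead work with $\|\bm\alpha^t-\bm\alpha^{t-1}\|\to 0$, extract a convergent subsequence of stepsizes, and pass to the limit in the fixed-point form of the update to get $\mathcal G_{\bar\eta}(\bar{\bm\alpha})=\0$ with $\bar\eta\geq\min\{\eta_0,\gamma/L\}>0$. Your route avoids citing the monotonicity lemma at the cost of a compactness argument on $\{\eta_t\}$ (which is in fact overkill: by \Cref{lem:unsuccessful} the stepsize is decreased only finitely often, so $\eta_t$ is eventually constant); the paper's route is more compact and yields the slightly stronger statement that the gradient mapping at a \emph{fixed} parameter $\eta_0$ vanishes along the entire sequence, not merely along subsequences. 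Both proofs ultimately rest on the same continuity facts (non-expansiveness of $\proj_{\mathcal C}$ and Lipschitz continuity of $\nabla f$ on $\mathcal C$) and on the equivalence between $\mathcal G_\eta(\widehat{\bm\alpha})=\0$ and stationarity holding for every $\eta>0$, so your bookkeeping concerns are all resolvable exactly as you anticipate.
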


\begin{proof}
  For $t \geq 0$,
  let $\eta_t$ be the value of the parameter $\eta$ of \Cref{alg:PGM}
  at the end of the $t$-th iteration of \Cref{line:outer_loop}.
  % From the above discussion, we have $\gamma/L < \eta_t \leq \eta_0$ for all $t$.
  % and therefore,
  By the condition in \Cref{line:sufficient_decrease_cond}, the definition of $\mathcal G_\eta(\bm \alpha)$, and \Cref{lem:unsuccessful}, we have
  % After $T$ executions of XXX \memo{tbm}, we have 
  \begin{align}
    f(\bm \alpha^{t}) - f(\bm \alpha^{t-1})
    &\leq - \frac{\eta_t}{2} \| \mathcal G_{\eta_t}(\bm \alpha^{t-1}) \|^2 \\
    &\leq - \frac{\min \set{\eta_0, \gamma/L}}{2} \| \mathcal G_{\eta_t}(\bm \alpha^{t-1}) \|^2.
    \label{eq:diff_obj} 
  \end{align}
  % \begin{align}
  %   f(\bm \alpha^{t}) - f(\bm \alpha^{t-1})
  %   &\leq - \frac{\eta_t}{2} \prn[\bigg]{ \frac{ \|\bm \alpha^{t} - \bm \alpha^{t-1}\|_2}{\eta_t} }^2 \label{eq:diff_obj} \\
  %   &\leq - \frac{\min \set{\eta_0, \gamma/L}}{2} \prn[\bigg]{ \frac{ \|\bm \alpha^{t} - \bm \alpha^{t-1}\|_2 }{\eta_t} }^2.
  %   % \nonumber
  %   % &\leq - \frac{\gamma}{2L} \|\mathcal G_{\eta_t}(\bm \alpha^{t-1})\|_2^2,
  % \end{align}
  Summing up these inequalities for $t = 1,2,\dots,T$, we obtain
  \begin{align}
    \min_{1 \leq t \leq T} \!
    \| \mathcal G_{\eta_t}(\bm \alpha^{t-1}) \|^2
    &\leq \frac{2 (f(\bm \alpha^0) - f(\bm \alpha^{T})) }{T \min \set{\eta_0, \gamma/L} }
    \leq \frac{2 (f(\bm \alpha^0) - f(\bm \alpha^*)) }{T \min \set{\eta_0, \gamma/L} }.
  \end{align}
  The first result follows from the equivalence between the termination condition in \Cref{line:termination_cond} and $\| \mathcal G_{\eta_t}(\bm \alpha^{t-1}) \| \leq \epsilon$.
  Since $\|\mathcal G_\eta(\bm \alpha)\|$ is decreasing in $\eta$ (see Lemma 2 in \cite{nesterov2013gradient}), 
  we have from Equation~\eqref{eq:diff_obj} and $\eta_t \leq \eta_0$ that
  %we have from \cref{eq:diff_obj} and $\eta_t \leq \eta_0$ that
  \begin{align}
    \sum_{t = 1}^\infty \| \mathcal G_{\eta_0}(\bm \alpha^{t-1}) \|^2
    \leq
    \sum_{t = 1}^\infty \| \mathcal G_{\eta_t}(\bm \alpha^{t-1}) \|^2
    \leq \frac{2 (f(\bm \alpha^0) - f(\bm \alpha^*)) }{\min \set{\eta_0, \gamma/L} }.
  \end{align}
  Thus, we obtain $\lim_{t \to \infty} \| \mathcal G_{\eta_0}(\bm \alpha^t) \| = 0$, which implies the second result.
\end{proof}

% We can confirm through the proof of Theorem~\ref{theo:iter} the global asymptotic convergence of Algorithm~\ref{alg:PGM} where the termination criterion, \Cref{line:termination_cond,line:return_ans}, is removed.
% Let $\{\bm \alpha^t\}_{t= 0}^\infty$ be an infinite sequence generated by Algorithm~\ref{alg:PGM} with $\epsilon=0$.
% Then \eqref{eq:diff_obj} implies that $\inf_{t\to \infty} \frac{ \|\bm \alpha^{t} - \bm \alpha^{t-1}\|_2}{\eta_t} = 0$
% because otherwise, the function value of $f$ decreases every iteration, which contradicts the existence of the optimal solution
% $\bm \alpha^*$. Since $\frac{ \|\bm \alpha^{t} - \bm \alpha^{t-1}\|_2}{\eta_t}= \| \mathcal G_{\eta_t}(\bm \alpha^{t-1}) \|$, 
% we have $\inf_{t\to \infty} \| \mathcal G_\eta(\bm \alpha^t) \|= 0$ \memo{$\eta=1/L$??}, which implies that any accumulation point
% of $\{\bm \alpha^t\}_{t= 0}^\infty$  is a stationary point of Problem (P). 

% From Theorem~\ref{theo:iter}, we obtain the global asymptotic convergence in the following sense.
% For the sequence $\{\bm \alpha^t\}_{t= 0}^\infty$ generated by Algorithm~\ref{alg:PGM},
% there exists an accumulation point $\tilde{\bm \alpha}$ that is a stationary point of Problem (P), i.e.,
% $\mathcal G_\eta(\tilde{\bm \alpha})=0$. Indeed, Theorem~\ref{theo:iter} implies that as the number of iteration $t$ goes to infinity, $\epsilon$ goes to $0$ and 
% $\liminf_{t\to \infty} \| \mathcal G_\eta(\bm \alpha^t) \|= 0$ holds.

\subsection{Evaluation of Gradient of Objective Function}
\label{sec:eval_gradient}
The following lemma gives the expression of the gradient vector of the objective function in Problem (P).
\begin{lemma}
  Let $M = I - \Diag(\1 - \bm \alpha) P$.
  Then it holds that 
  \begin{align}
    \nabla f(\bm{\alpha})
    = \Diag \prn[\big]{M^{-\top} \1} \prn[\big]{\bm{s} - P M^{-1} \Diag(\bm{\alpha}) \bm{s} }, 
    \label{eq:grad_f}
  \end{align}
where $M^{-\top}$ denotes the transposed inverse of $M$, i.e., $M^{-\top}= (M^{-1})^\top$. 
%\memo{$M^{-\top}$の定義は書いた方がよい？}
\end{lemma}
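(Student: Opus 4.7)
The plan is a direct differentiation of $f(\bm\alpha) = \bm 1^\top M^{-1} \Diag(\bm\alpha)\bm s$ with respect to each coordinate $\alpha_i$, followed by assembling the resulting expression into vector form.

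First, I would compute $\partial M / \partial \alpha_i$. Since only the $i$-th diagonal entry of $\Diag(\1 - \bm\alpha)$ depends on $\alpha_i$, we have $\partial M / \partial \alpha_i = \Diag(\bm e_i) P = \bm e_i \bm e_i^\top P$, where $\bm e_i$ is the $i$-th standard basis vector. Next, using the standard identity $\partial M^{-1}/\partial \alpha_i = -M^{-1}(\partial M/\partial \alpha_i)M^{-1}$ together with the product rule applied to $f(\bm\alpha)$, I would derive
\begin{align}
\frac{\partial f}{\partial \alpha_i}
&= -\bm 1^\top M^{-1} \bm e_i \bm e_i^\top P M^{-1} \Diag(\bm\alpha)\bm s
+ \bm 1^\top M^{-1} \bm e_i \bm e_i^\top \bm s.
\end{align}

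Then I would factor out the common scalar $\bm 1^\top M^{-1} \bm e_i = (M^{-\top}\bm 1)_i$ to obtain
\begin{align}
\frac{\partial f}{\partial \alpha_i}
= (M^{-\top}\bm 1)_i \bigl( s_i - (P M^{-1} \Diag(\bm\alpha)\bm s)_i \bigr)
= (M^{-\top}\bm 1)_i \bigl(\bm s - P M^{-1} \Diag(\bm\alpha)\bm s\bigr)_i.
\end{align}
Stacking these entries into a vector for $i \in V$ is exactly multiplication on the left by $\Diag(M^{-\top}\bm 1)$, which yields the claimed formula~\eqref{eq:grad_f}.

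I do not anticipate any serious obstacle: the argument is purely a bookkeeping exercise with matrix calculus. The only point deserving care is confirming that the differentiation of the matrix inverse is legitimate, which requires $M$ to be invertible in a neighborhood of $\bm\alpha$; this is guaranteed because the equilibrium opinion vector \eqref{eq:def_of_z} is well-defined whenever $\alpha_i > 0$ for all $i$, and the feasible region $\mathcal C$ lies in $[\bm l, \bm u]$ with $\bm l > \bm 0$. Beyond that, the derivation reduces to identifying $\bm e_i \bm e_i^\top$ factors on both sides and recognizing the outer $\Diag(\cdot)$ structure.
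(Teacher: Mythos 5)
Your proposal is correct and follows essentially the same route as the paper: the paper introduces $\bm v = M^{-\top}\bm 1$ and differentiates the identity $\bm v^\top M = \bm 1^\top$, which is just another way of writing your step $\partial M^{-1}/\partial\alpha_i = -M^{-1}(\partial M/\partial\alpha_i)M^{-1}$, and both arrive at $\partial f/\partial\alpha_i = (M^{-\top}\bm 1)_i\,(s_i - \bm p_i^\top M^{-1}\Diag(\bm\alpha)\bm s)$ with $\bm p_i^\top = \bm e_i^\top P$. Your remark on invertibility of $M$ on the feasible region is a fine (and valid) extra point of care.
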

\begin{proof}
  % Lemma 5.2~\cite{abebe2020ropinion}
  % \begin{align}
  %   \frac{\partial z(\boldsymbol \alpha)}{\partial \alpha_i} &= M\bm{e}_{i}\bm{e}_{i}^\top( \bm{s} - P z(\bm{\alpha})) \\
  %   \frac{\partial M}{\partial \alpha_i} &= -M\bm{e}_i\bm{e}_{i}^\top PM
  % \end{align}
  Let $P = [\bm{p}_1,\dots,\bm{p}_n]^\top$ and $\bm v = M^{-\top} \bm 1$. 
  As $f(\bm{\alpha}) = \bm{v}^\top \Diag(\bm{\alpha}) \bm{s}$ holds, we have 
  \begin{align}
    \frac{\partial f}{\partial \alpha_i}
    = v_i s_i
    + \prn[\Big]{\frac{\partial \bm{v}}{\partial \alpha_i}}^\top \Diag(\bm \alpha) \bm s.
    \label{eq:f_partial_alpha_i}
  \end{align}
  On the other hand, differentiating both sides of $\bm{v}^\top M  = \bm 1^\top$ with respect to $\alpha_i$, we have 
$
  %\[
    \bm{v}^\top \frac{\partial M}{\partial \alpha_i}
    + \prn[\Big]{ \frac{\partial \bm{v}}{\partial \alpha_i} }^\top M
    = \0^\top, 
$
  %\]
  which implies that 
  \begin{align}
    \prn[\Big]{ \frac{\partial \bm{v}}{\partial \alpha_i} }^\top
    = - \bm{v}^\top \frac{\partial M}{\partial \alpha_i} M^{-1}
    = - v_i \bm{p}_i^\top M^{-1}. 
    \label{eq:vt_partial_alpha_i}
  \end{align}
  Combining Equations \eqref{eq:f_partial_alpha_i} and \eqref{eq:vt_partial_alpha_i}, we have
  \[
    \frac{\partial f}{\partial \alpha_i}
    = v_i \prn[\big]{
      s_i - \bm{p}_i^\top M^{-1} \Diag(\bm \alpha) \bm s
    }, 
  \]
  which is equivalent to Equation~\eqref{eq:grad_f}. 
\end{proof}

The expression $\nabla f(\bm{\alpha})$ includes the inverse matrices $M^{-\top}$ and $M^{-1}$. 
However, we can avoid the inverse matrix computation consuming $O(|V|^3)$ time, by solving 
each linear equation having the matrix $M$ or $M^\top$ as a coefficient matrix once. 
The sparsity of  $M$ and $M^\top$ can be utilized by solving those equations using the Biconjugate gradient (BiCG) method.
Since the product of matrix $ M, M^\top$ and a vector can be calculated in $ O (| V | + | E |) $ time, 
the gradient $\nabla f(\bm{\alpha})$ can be calculated in $O(T (|V| + |E|))$ time, 
where $T$ is the number of iterations in the BiCG method. 
%under the assumption that the number of iterations in BiCG method is $T$.

%\Ja{
%  この補題より，勾配 $\nabla f(\bm{\alpha})$ の計算のためには，行列 $M$ と $M^\top$ を係数行列とする
%  線形方程式を 1 回ずつ解けばよい．
%  行列 $M$ の sparsity を活用するために，ここでは方程式を Biconjugate gradient (BiCG) method により解くことにする．
%  $M, M^\top$ とベクトルの積は $O(|V| + |E|)$ 時間で計算できるため，BiCG method の反復回数を $T$ として，
%  勾配 $\nabla f(\bm{\alpha})$ は $O(T (|V| + |E|))$ 時間で計算できる．
%  \memo{内容を軽く確認して，英語化をお願いします．}
%}

\subsection{Projection onto Feasible Region}
\label{sec:projection}
The feasible region $\mathcal C$ consists of two kinds of constraints: the box constraint and the $\ell_p$-norm constraint,
which makes 
%二つの制約を持っていること，が主語と見て，makes
the projection operation seem difficult at first glance. 
However, we can show that the projection operation can be executed by a simple bisection method. 
%presented in Algorithm~\ref{alg:projection}.

Here we introduce some notation to describe an algorithm and its analysis. 
We define three functions $g_1, g_2, g: \R^V \to \R \cup \set{+\infty}$ by
\begin{align}
  g_1(\bm \alpha) &\coloneqq \|\bm \alpha - \bm \alpha^{\mathrm{init}}\|_p^p,
  \label{eq:def_of_g1}\\
  g_2(\bm \alpha) &\coloneqq
  \begin{dcases*}
    0 & if $\bm \alpha \in [\bm l, \bm u]$,\\
    +\infty & otherwise,
  \end{dcases*}\quad 
  g(\bm \alpha) \coloneqq g_1(\bm \alpha) + g_2(\bm \alpha).
\end{align}
For a convex function $h: \R^V \to \R \cup \set{+\infty}$, its proximal operator
$\prox_h: \R^V \to \R^V$ is defined by
\begin{align}
  \prox_h(\bm \alpha)
  \coloneqq \argmin_{\bm \beta \in \R^V} \set[\Big]{
    h(\bm \beta) + \frac{1}{2} \|\bm \beta - \bm \alpha\|_2^2
  }.
\end{align}
Note that the projection operator $\proj$ onto a set $\mathcal S$ can be rewritten as the proximal operator with the indicator function of $\mathcal S$ such as $\proj_{[\bm l, \bm u]}=\prox_{g_2}$. Then we have the following theorem:

\begin{theorem}
  [Theorem 6.30 in \cite{beck2017first}, modified to our setting]
  \label{thm:bisection}
  There exists $\lambda^* \geq 0$ such that
  $\prox_{\lambda g}(\bm \alpha) \notin \mathcal C$ for $\lambda \in [0, \lambda^*)$ and
  $\prox_{\lambda g}(\bm \alpha) \in \mathcal C$ for $\lambda \in [\lambda^*, +\infty)$.
  Furthermore, $\proj_{\mathcal C}(\bm \alpha) = \prox_{\lambda^* g}(\bm \alpha)$ holds for such $\lambda^*$.
\end{theorem}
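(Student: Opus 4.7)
The plan is to follow the Lagrangian-relaxation strategy that underlies Theorem~6.30 of \cite{beck2017first} and adapt it to the presence of the box indicator $g_2$. I would view $\proj_{\mathcal C}(\bm \alpha)$ as the solution of the convex program
\[
\argmin_{\bm \beta \in [\bm l, \bm u],\ g_1(\bm \beta) \leq k^p} \tfrac{1}{2} \|\bm \beta - \bm \alpha\|_2^2.
\]
Since $p \geq 1$ makes $g_1$ convex, and $\bm \alpha^{\mathrm{init}}$ lies in $[\bm l, \bm u]$ with $g_1(\bm \alpha^{\mathrm{init}}) = 0 < k^p$ (the trivial case $k = 0$ forces $\mathcal C = \set{\bm \alpha^{\mathrm{init}}}$ and is immediate), Slater's condition holds, so strong duality furnishes a KKT multiplier $\lambda^* \geq 0$ for the $\ell_p$ constraint. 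For any $\lambda \geq 0$, the associated Lagrangian minimization over the box is exactly $\prox_{\lambda g}(\bm \alpha)$ up to an additive constant, so such $\lambda^*$ already delivers the identity $\proj_{\mathcal C}(\bm \alpha) = \prox_{\lambda^* g}(\bm \alpha)$ asserted in the last sentence of the theorem.

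The remaining claim, the dichotomy in $\mathcal C$-membership as $\lambda$ crosses $\lambda^*$, will follow from analysing the scalar function $\phi(\lambda) \coloneqq g_1(\prox_{\lambda g}(\bm \alpha))$ on $[0, +\infty)$. For monotonicity I would use a standard exchange argument: for $0 \leq \lambda_1 < \lambda_2$ and $\bm \beta_i \coloneqq \prox_{\lambda_i g}(\bm \alpha)$, evaluating the defining inequalities of the two proxes at each other's argument and adding yields $(\lambda_2 - \lambda_1)(g(\bm \beta_1) - g(\bm \beta_2)) \geq 0$; since $g_2(\bm \beta_i) = 0$ (both iterates lie in the box), this reduces to $\phi(\lambda_2) \leq \phi(\lambda_1)$. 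Continuity of $\lambda \mapsto \prox_{\lambda g}(\bm \alpha)$, and hence of $\phi$, comes from the $1$-strong convexity of the prox objective together with boundedness of $g_1$ on the compact box $[\bm l, \bm u]$; as $\lambda \to +\infty$, $\prox_{\lambda g}(\bm \alpha) \to \argmin g = \bm \alpha^{\mathrm{init}}$, so $\phi(\lambda) \to 0$. Setting $\lambda^* \coloneqq \inf \set{\lambda \geq 0 : \phi(\lambda) \leq k^p}$, the continuity and monotonicity of $\phi$ together produce both the two-sided split ($\phi(\lambda) > k^p$ strictly below $\lambda^*$ by the definition of the infimum, and $\phi(\lambda) \leq k^p$ at and above $\lambda^*$ by monotonicity combined with right-continuity) and the complementary slackness $\lambda^*(\phi(\lambda^*) - k^p) = 0$ needed to identify this $\lambda^*$ with the KKT multiplier of the first paragraph.

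The main obstacle I anticipate is a careful treatment of the boundary $\lambda = 0$ and of the complementary-slackness identification. In particular, for the first sentence of the theorem to be nonvacuous when $\bm \alpha \notin \mathcal C$ we must adopt the convention $0 \cdot g_2 \equiv g_2$, so that $\prox_{0 \cdot g}(\bm \alpha) = \proj_{[\bm l, \bm u]}(\bm \alpha)$ rather than $\bm \alpha$ itself; this is also the convention that keeps the KKT interpretation coherent. A related subtlety is that $\phi$ may exhibit plateaus, so the statement ``$\prox_{\lambda g}(\bm \alpha) \in \mathcal C$ for $\lambda \geq \lambda^*$'' has to be justified through closedness of $\mathcal C$ and continuity of the prox rather than through strict monotonicity. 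Once these points are handled, uniqueness of the strongly convex minimizer closes the argument by pinning the KKT multiplier and the threshold $\lambda^*$ to the same value.
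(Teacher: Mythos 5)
Your proof is correct, but it takes a genuinely more self-contained route than the paper. The paper's entire proof is a one-line reduction: rewrite $\mathcal C$ as the sublevel set $\Set{\bm \alpha \in \R^V}{g(\bm \alpha) \leq k^p}$ of the proper closed convex function $g = g_1 + g_2$ and invoke Theorem~6.30 of \cite{beck2017first} as a black box. You instead reconstruct the content of that cited theorem from scratch: Slater plus KKT to produce the multiplier $\lambda^*$ and the identity $\proj_{\mathcal C}(\bm \alpha) = \prox_{\lambda^* g}(\bm \alpha)$, and the exchange argument showing $\phi(\lambda) = g_1(\prox_{\lambda g}(\bm \alpha))$ is nonincreasing and continuous to get the threshold dichotomy. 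All the steps check out: the added prox-optimality inequalities do give $(\lambda_2-\lambda_1)(\phi(\lambda_1)-\phi(\lambda_2)) \geq 0$; strong convexity yields H\"older continuity of the prox in $\lambda$ (using boundedness of $g_1$ on $[\bm l,\bm u]$); and complementary slackness at $\lambda^* = \inf\set{\lambda : \phi(\lambda) \leq k^p}$ follows from continuity when $\lambda^* > 0$. What your version buys is an explicit handling of edge cases that the citation glosses over — the Slater hypothesis fails when $k=0$ (you dispose of that case directly), and the convention $\prox_{0\cdot g} = \proj_{[\bm l,\bm u]}$ at $\lambda = 0$ is made explicit — at the cost of considerably more length than the paper's two-sentence argument.
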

\begin{proof}
  We rewrite the feasible region $\mathcal{C}$ as the sublevel set of the convex function $g$, 
  i.e., $\mathcal C = \Set{\bm \alpha \in \R^V }{ g(\bm \alpha) \leq k^p }$,
  and apply Theorem 6.30 in \cite{beck2017first}.
\end{proof}

From the above theorem, it suffices to apply a
bisection algorithm to find the smallest $\lambda^*$ for $\prox_{\lambda g}(\bm \alpha) \in \mathcal C$.
We can use the following theorem for computing $\prox_{\lambda g}(\bm \alpha)$. 
\begin{theorem}
  \label{thm:decompose_prox}
  For all $\bm \alpha \in \R^V$ and $\lambda > 0$,
  it holds that
  $\prox_{\lambda g}(\bm \alpha) = \proj_{[\bm l, \bm u]} (\prox_{\lambda g_1} (\bm \alpha))$.
\end{theorem}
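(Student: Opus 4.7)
The plan is to exploit separability of both $g_1$ and $g_2$ to reduce the identity to a one-dimensional statement, and then dispatch the 1D case with a short convexity argument.

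First, I would observe that $g_1(\bm\alpha) = \sum_{i \in V} |\alpha_i - \alpha_i^{\mathrm{init}}|^p$ is separable coordinatewise, and $g_2$ is the indicator of the product set $[\bm l, \bm u] = \prod_{i \in V} [l_i, u_i]$, hence also separable. Since the quadratic regularizer $\tfrac{1}{2}\|\bm\beta - \bm\alpha\|_2^2$ in the definition of $\prox$ is itself separable, both $\prox_{\lambda g}(\bm\alpha)$ and $\prox_{\lambda g_1}(\bm\alpha)$ decompose into $|V|$ independent scalar subproblems, and $\proj_{[\bm l,\bm u]}$ is already coordinatewise. It therefore suffices to prove, for each $i \in V$, that
\begin{align}
\argmin_{x \in [l_i, u_i]} \phi_i(x) = \proj_{[l_i, u_i]}\Bigl(\argmin_{x \in \R} \phi_i(x)\Bigr),
\end{align}
where $\phi_i(x) \coloneqq \lambda |x - \alpha_i^{\mathrm{init}}|^p + \tfrac{1}{2}(x - \alpha_i)^2$.

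For the 1D identity I would use that $p \geq 1$ makes $|\cdot|^p$ convex, so $\phi_i$ is strictly convex on $\R$ and therefore admits a unique unconstrained minimizer $x_i^\star$. A three-case split on the location of $x_i^\star$ then closes the argument: if $x_i^\star \in [l_i, u_i]$, it is both the constrained minimum and its own projection; if $x_i^\star > u_i$, convexity forces $\phi_i$ to be nonincreasing on $(-\infty, x_i^\star]$, so its minimum on $[l_i, u_i]$ is attained at $u_i = \proj_{[l_i, u_i]}(x_i^\star)$; the case $x_i^\star < l_i$ is symmetric.

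I do not expect a real obstacle: both the separability reduction and the ``unimodal clipping'' step are routine once stated. The only point worth flagging is that the whole argument genuinely uses convexity of $|\cdot|^p$, and hence $p \geq 1$, which matches the standing assumption of \Cref{sec:algorithm}. As a small bookkeeping remark, $g_1$ (and thus $g$) is proper, closed, and convex, so the proximal operators involved are single-valued, which justifies writing $\argmin$ without ambiguity.
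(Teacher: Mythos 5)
Your proof is correct, and it reaches the conclusion by a genuinely different route at the decisive step. The paper makes the same opening move — separability of $g_1$, $g_2$, and the quadratic term reduces everything to independent scalar problems — but then disposes of the one-dimensional identity by invoking Theorem~2 of Yu (2013) on decomposing proximal maps, which yields $\prox_{\lambda g} = \prox_{\lambda g_2} \circ \prox_{\lambda g_1}$ as a black box, followed by the observation that $\prox_{\lambda g_2} = \proj_{[\bm l, \bm u]}$. You instead prove the scalar identity from scratch: the unconstrained minimizer $x_i^\star$ of the strictly convex $\phi_i$ exists and is unique, and since a convex function is nonincreasing to the left of its minimizer and nondecreasing to the right, the constrained minimizer over $[l_i, u_i]$ is exactly the clipping of $x_i^\star$ to that interval. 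This buys self-containedness — no external decomposition theorem is needed, and the role of $p \geq 1$ (convexity of $\abs{\cdot}^p$, hence of $\phi_i$) is made explicit rather than hidden inside a citation — at the cost of a slightly longer argument. The paper's citation-based route is shorter and would generalize to other separable pairs covered by Yu's theorem, but for this particular $g_1$, $g_2$ your elementary clipping argument is arguably more transparent. One cosmetic remark: your three-case split only truly uses that $\phi_i$ is unimodal about $x_i^\star$, which convexity supplies; strict convexity is needed only to make the $\argmin$ single-valued, and that already follows from the quadratic term.
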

\begin{proof}
  Since both $g_1$ and $g_2$ are decomposable into the sum of one-variable convex functions,
  the proximal operators $\prox_{\lambda g_1}$, $\prox_{\lambda g_2}$, and $\prox_{\lambda g}$
  are also decomposable for each variable.
  Thus, by Theorem 2 in \cite{yu2013decomposing}, we have 
  $\prox_{\lambda g}(\bm \alpha) = \prox_{\lambda g_2} (\prox_{\lambda g_1} (\bm \alpha))$
  for all $\lambda \geq 0$.
  We also have $\proj_{[\bm l, \bm u]} = \prox_{\lambda g_2}$ for all $\lambda > 0$,
  and obtain the desired results.
\end{proof}

From the above discussion, we obtain an algorithm for calculating the projection onto the feasible region, 
presented in \Cref{alg:projection}.
In \Cref{line:check_feasiblity,line:return_projected_vec} of \Cref{alg:projection}, 
the calculation of $\prox_{\lambda g_1}$ for $\lambda > 0$ is required.
Since $g_1$ can be expressed as the sum of one-variable convex functions,
the proximal operator of $\lambda g_1$ can be easily calculated.
In particular, the closed formula of $\prox_{\lambda g_1}(\bm \alpha)$ is known for $p = 1$ and $2$: 
for $p=1$ 
\begin{align}
  (\prox_{\lambda g_1}(\bm \alpha))_i
  =
  \begin{dcases*}
    \max \set{\alpha_i - \lambda, \alpha^{\mathrm{init}}_i } & if $\alpha_i \geq \alpha^{\mathrm{init}}_i$,\\
    \min \set{\alpha_i  + \lambda, \alpha^{\mathrm{init}}_i } & otherwise, 
  \end{dcases*}
\end{align}
and for $p=2$
\begin{align}
  (\prox_{\lambda g_1}(\bm \alpha))_i
  = \alpha^{\mathrm{init}}_i + \frac{\alpha_i - \alpha^{\mathrm{init}}_i}{2\lambda + 1}. 
\end{align}
%for $p = 2$.

In \Cref{line:initialization}, we have to set an upper bound on $\lambda^*$ in \Cref{thm:bisection}.
For $p=1$, we can use $\|\bm \alpha - \bm \alpha^{\mathrm{init}}\|_\infty$ as the upper bound 
because $\prox_{\lambda g_1}(\bm \alpha) = \bm \alpha^{\mathrm{init}}$ holds
for $\lambda = \|\bm \alpha - \bm \alpha^{\mathrm{init}}\|_\infty$.
% For $p=2$, we can set $\lambda_{\mathrm{right}} \gets (\|\bm \alpha - \bm \alpha^{\mathrm{init}}\|_2 / k - 1) / 2$ because
% $\|\prox_{\lambda_{\mathrm{right}} g_1}(\bm \alpha) - \bm \alpha^{\mathrm{init}}\|_2 \leq k$ holds.

\begin{algorithm}[t]
  \caption{Projection onto the feasible region $\mathcal C$}
  \label{alg:projection}
  \SetKwInOut{Input}{Input}
  \SetKwInOut{Output}{Output}
  \Input{\ 
    $\bm \alpha \in \R^V$, \ $T > 0$
  }
  \Output{\ $\proj_{\mathcal C}(\bm \alpha)$ with $O(2^{-T})$-error}
  \If{$\proj_{[\bm l, \bm u]}(\bm \alpha) \in \mathcal C$}{
    \Return $\proj_{[\bm l, \bm u]}(\bm \alpha)$\;
  }
  $\lambda_\mathrm{left} \gets 0$, \ $\lambda_\mathrm{right} \gets \text{(An upper bound of $\lambda^*$ in \Cref{thm:bisection})}$\label{line:initialization}\;
  \For{$t = 1,2,\dots,T$}{
    $\lambda_\mathrm{mid} \gets (\lambda_\mathrm{left} + \lambda_\mathrm{right}) / 2$\;
    \If{
      $\proj_{[\bm l, \bm u]}( \prox_{\lambda_{\mathrm{mid}} g_1} (\bm \alpha) ) \in \mathcal C$
      \label{line:check_feasiblity}
    }{
      \tcp{$g_1$ is defined by \cref{eq:def_of_g1}}
      $\lambda_\mathrm{right} \gets \lambda_\mathrm{mid}$\;
    }
    \Else{
      $\lambda_\mathrm{left} \gets \lambda_\mathrm{mid}$\;
    }
  }
  \Return $\proj_{[\bm l, \bm u]}( \prox_{\lambda_{\mathrm{right}} g_1} (\bm \alpha) )$\label{line:return_projected_vec}\;
\end{algorithm}

 \section{Experimental Evaluation}\label{sec:experiments}

In this section, we evaluate the performance of the proposed algorithm in terms of both the quality of solutions and running time.

\subsection{Setup}
Here we explain the experimental setup in details. 
Throughout the experiments, we use $p=1$ and $2$. 

\paragraph{Instances.} 
Table~\ref{tab:instance} lists real-world graphs on which our experiments were conducted. 
As in the description, all graphs consist of social interactions among individuals, implying their suitability for use in our experiments. 

For each graph $G=(V,E)$, we generate an instance of our problem as follows: 
The procedure is almost identical to that of Chan et al.~\cite{Chan+19}; 
the only difference can be found in the determination of the parameter $k\in \mathbb{R}_{\geq 0}$ of the $\ell_p$-norm constraint 
because the problem studied in Chan et al.~\cite{Chan+19} does not have such a constraint. 
Specifically, the innate opinion $s_i$ of each vertex $i\in V$ is selected uniformly at random from $[0,1]$. 
For each $(i,j)\in V\times V$, we pick $w_{ij}=w_{ji}$ uniformly at random from $[0,1]$ if $\{i,j\}\in E$ and set $w_{ij}=w_{ji}=0$ otherwise. 
Then for each $(i,j)\in V\times V$, we set $P_{ij}=w_{ij}/\sum_{k\in V}w_{ik}$. 
The lower bound $l_i$ on the resistance of each vertex $i\in V$ is set to $0.001$ with probability $0.99$ 
and is selected uniformly at random from $[0.001,0.1]$ with probability $0.01$. 
The upper bound $u_i$ is set to $0.999$ with probability $0.99$ 
and is selected uniformly at random from $[0.9,0.999]$ with probability $0.01$. 
Then the initial resistance $\alpha^\text{init}_i$ of each vertex $i\in V$ is selected uniformly at random from $[l_i,u_i]$. 

To observe the performance of the proposed algorithm when changing the strength of the $\ell_p$-norm constraint, 
we use a variety of settings of parameter $k\in \mathbb{R}_{\geq 0}$ as follows: 
First we solve the problem studied in Chan et al.~\cite{Chan+19} (with the instance generated above) 
using their algorithm and obtain an optimal solution, which we denote by $\bm{\alpha}^\text{Chan+}$. 
Then we compute the value of $k'=\|\bm{\alpha}^\text{Chan+}-\bm{\alpha}^\text{init}\|_p$. 
Using this information, we generate an instance for each $k=ck'$ ($c=0.0,0.1,\dots, 1.0$). 
As the value of $c$ increases, the $\ell_p$-norm constraint becomes looser. 
For instance, when $c=0$, $\bm{\alpha}^\text{init}$ must be a unique feasible solution, 
whereas when $c=1$, $\bm{\alpha}^\text{Chan+}$ becomes a trivial optimal solution. 
\begin{table}[t]
\begin{center}
\caption{Real-world graphs used in our experiments.}\label{tab:instance}
\scalebox{0.95}{
\begin{tabular}{lrrl}
\toprule
Name & $|V|$   & $|E|$   &Description \\
\midrule
\textsf{Lesmis          }  &      77     &      254     &Co-appearance\\%
\textsf{Jazz            }  &     198     &     2,742    &Social network\\%
\textsf{Email           }  &    1,133    &     5,451    &Email communication\\%
\textsf{ca-GrQc         }  &    4,158    &    13,422    &Collaboration \\%
\textsf{ca-HepPh        }  &   11,204    &   117,619    &Collaboration \\%
\textsf{ca-AstroPh      }  &   17,903    &   196,972    &Collaboration \\%
\textsf{ca-CondMat      }  &   21,363    &    91,286    &Collaboration \\%
\textsf{email-Enron     }  &   33,696    &   180,811    &Email communication\\%
\textsf{soc-Epinions1   }  &   75,877    &   405,739    &Social network\\%
\textsf{soc-Slashdot0902}  &   82,168    &   504,230    &Social network\\%
\textsf{com-DBLP        }  &  317,080    &  1,049,866   &Co-authorship \\%
\textsf{com-Youtube     }  &  1,134,890  &  2,987,624   &Social network\\%
\bottomrule
\end{tabular}
}
\end{center}
\end{table}

\paragraph{Baseline methods.}

We employ the following three baselines, all of which are based on a simple greedy strategy starting from an initial solution specified. 

The first one, which we call \textsf{Gradient\_$\bm{\alpha}^\mathrm{Chan+}$},  starts from the solution $\bm{\alpha}^\text{Chan+}$. 
As $\bm{\alpha}^\text{Chan+}$ is not a feasible solution in most cases (due to the above instance generation procedure), 
the algorithm aims to reach a feasible solution with the increase of the objective function value as small as possible. 
To this end, the algorithm iteratively selects a vertex $v$ 
with minimum $\left|\frac{\partial f(\bm{\alpha})}{\partial \alpha_v}\right|$, where $\bm{\alpha}$ is a current solution, among the vertices that have not yet been selected 
and shifts its resistance to the initial resistance $\alpha^\text{init}_v$. 
The complete procedure is described in Algorithm~\ref{alg:baseline_Chan}. 
\begin{algorithm}[t]
\caption{\textsf{Gradient\_$\bm{\alpha}^\mathrm{Chan+}$}}\label{alg:baseline_Chan}
\SetKwInOut{Input}{Input}
\SetKwInOut{Output}{Output}
\Input{\ $V$, $\bm{s}\in [0,1]^{V}$, $P\in [0,1]^{V\times V}$, $\bm{l}, \bm{\alpha}^\mathrm{init}, \bm{u}\in (0,1)^{V}$ ($\bm{0}<\bm{l}\leq \bm{\alpha}^\mathrm{init}\leq \bm{u}\leq \bm{1}$), and $k\in \mathbb{R}_{\geq 0}$}
\Output{\ $\bm{\alpha}\in (0,1]^V$}
$\bm{\alpha} \leftarrow \bm{\alpha}^\text{Chan+}$\;
$V_\mathrm{cand}\leftarrow V$\;
\While{$\|\bm{\alpha}-\bm{\alpha}^\mathrm{init}\|_p>k$}{
Find $v\in \argmin\left\{\left|\frac{\partial f(\bm{\alpha})}{\partial \alpha_v}\right| : v\in V_\mathrm{cand}\right\}$\;
$\alpha_v\leftarrow \alpha^\mathrm{init}_v$\;
$V_\mathrm{cand}\leftarrow V_\mathrm{cand}\setminus \{v\}$\;
}
\Return $\bm{\alpha}$\;
\end{algorithm}

The second one, which we refer to as \textsf{Gradient\_$\bm{\alpha}^\mathrm{init}$}, starts from the initial resistance vector $\bm{\alpha}^\text{init}$, 
which is always a feasible solution of our problem. 
The algorithm aims to decrease the objective function value as much as possible without violating the $\ell_p$-norm constraint. 
To this end, the algorithm iteratively selects a vertex $v$ 
with maximum $\left|\frac{\partial f(\bm{\alpha})}{\partial \alpha_v}\right|$ among the vertices that have not yet been selected; 
we set its resistance to $l_v$ if $\frac{\partial f(\bm{\alpha})}{\partial \alpha_v}$ is nonnegative and to $u_v$ otherwise. 
The pseudo-code is given in Algorithm~\ref{alg:baseline_alpha-init}. 

The above two baselines are expected to perform well in practice because they effectively use the information of the gradient of the objective function. 
However, they are not applicable to large-sized instances because they require the computation of the gradient $O(|V|^2)$ times in the worst case. 
Therefore, we employ a faster method, which we call \textsf{Column-sum\_$\bm{\alpha}^\mathrm{Chan+}$}. 
This method is a variant of \textsf{Gradient\_$\bm{\alpha}^\mathrm{Chan+}$}. 
The only difference can be found in the way to choose a vertex for which we change the resistance value. 
Specifically, in this method, we first sort the elements of $V$ in increasing order of the column-sum of interaction matrix $P$, and go through the list. 
This method comes from the intuition that the change of resistance values of vertices with a small column-sum in $P$ 
may not have a significant effect on the objective function value. 
The complete procedure is described in Algorithm~\ref{alg:baseline_degree}. 
Note that it is not trivial to give an $\bm{\alpha}^\text{init}$ counterpart of \textsf{Column-sum\_$\bm{\alpha}^\mathrm{Chan+}$}. 
In fact, we are not sure how to change the value of $\alpha^\text{init}_v$ (e.g., to $l_v$ or $u_v$) because we no longer have the information of the gradient.

\begin{algorithm}[t]
\caption{\textsf{Gradient\_$\bm{\alpha}^\mathrm{init}$}}\label{alg:baseline_alpha-init}
\SetKwInOut{Input}{Input}
\SetKwInOut{Output}{Output}
\Input{\ $V$, $\bm{s}\in [0,1]^{V}$, $P\in [0,1]^{V\times V}$, $\bm{l}, \bm{\alpha}^\mathrm{init}, \bm{u}\in (0,1)^{V}$ ($\bm{0}<\bm{l}\leq \bm{\alpha}^\mathrm{init}\leq \bm{u}\leq \bm{1}$), and $k\in \mathbb{R}_{\geq 0}$}
\Output{\ $\bm{\alpha}\in (0,1]^V$}
$\bm{\alpha} \leftarrow \bm{\alpha}^\mathrm{init}$\;
$V_\mathrm{cand}\leftarrow V$\;
\While{$V_\mathrm{cand}\neq \emptyset$}{
$\bm{\alpha}_\mathrm{prev}\leftarrow \bm{\alpha}$\;
Find $v\in \argmax\left\{\left|\frac{\partial f(\bm{\alpha})}{\partial \alpha_v}\right| : v\in V_\mathrm{cand}\right\}$\;
\textbf{if} $\frac{\partial f(\bm{\alpha})}{\partial \alpha_v}\geq 0$ \textbf{then} $\alpha_v\leftarrow l_v$\;
%\If{$\frac{\partial f(\bm{\alpha})}{\partial \alpha_v}\geq 0$}{
%$\alpha_v\leftarrow l_v$\;}
\textbf{else} $\alpha_v\leftarrow u_v$\;
%\Else{$\alpha_v\leftarrow u_v$\;}
\textbf{if} $\|\bm{\alpha}-\bm{\alpha}^\mathrm{init}\|_p>k$ \textbf{then} \textbf{return} $\bm{\alpha_\text{prev}}$\;
%\If{$\|\bm{\alpha}-\bm{\alpha}^\mathrm{init}\|_1>k$}{
%\Return $\bm{\alpha_\text{prev}}$\;
%}
$V_\mathrm{cand}\leftarrow V_\mathrm{cand}\setminus \{v\}$\;
}
\Return $\bm{\alpha}$\;
\end{algorithm}

\begin{algorithm}[t]
\caption{\textsf{Column-sum\_$\bm{\alpha}^\mathrm{Chan+}$}}\label{alg:baseline_degree}
\SetKwInOut{Input}{Input}
\SetKwInOut{Output}{Output}
\Input{\ $V$, $\bm{s}\in [0,1]^{V}$, $P\in [0,1]^{V\times V}$, $\bm{l}, \bm{\alpha}^\mathrm{init}, \bm{u}\in (0,1)^{V}$ ($\bm{0}<\bm{l}\leq \bm{\alpha}^\mathrm{init}\leq \bm{u}\leq \bm{1}$), and $k\in \mathbb{R}_{\geq 0}$}
\Output{\ $\bm{\alpha}\in (0,1]^V$}
$\bm{\alpha} \leftarrow \bm{\alpha}^\text{Chan+}$\;
Sort the elements of $V$ so that $(v_1,v_2,\dots, v_{|V|})$ with $\sum_{v\in V}P_{v,v_1}\leq \sum_{v\in V}P_{v,v_2}\leq \cdots \leq \sum_{v\in V}P_{v,v_{|V|}}$\;
%Compute $(v_1,v_2,\dots, v_{|V|})$ such that $\text{deg}(v_1)\leq \text{deg}(v_2)\leq \cdots \leq \text{deg}(v_{|V|})$\;
%\While{$\|\bm{\alpha}-\bm{\alpha}^\mathrm{init}\|>k$}{
\For{$i=1,2,\dots, |V|$}{
  \textbf{if} $\|\bm{\alpha}-\bm{\alpha}^\mathrm{init}\|_p\leq k$ \textbf{then} \textbf{break}\;
  %\If{$\|\bm{\alpha}-\bm{\alpha}^\mathrm{init}\|_1\leq k$}{
  %  \textbf{break}\;
  %}
  $\alpha_i\leftarrow \alpha^\mathrm{init}_i$\;
}
\Return $\bm{\alpha}$\;
\end{algorithm}

\begin{figure*}[t]
\scalebox{0.963}{
\begin{subfloat}[ca-HepPh]{
\includegraphics[width=0.292\textwidth]{./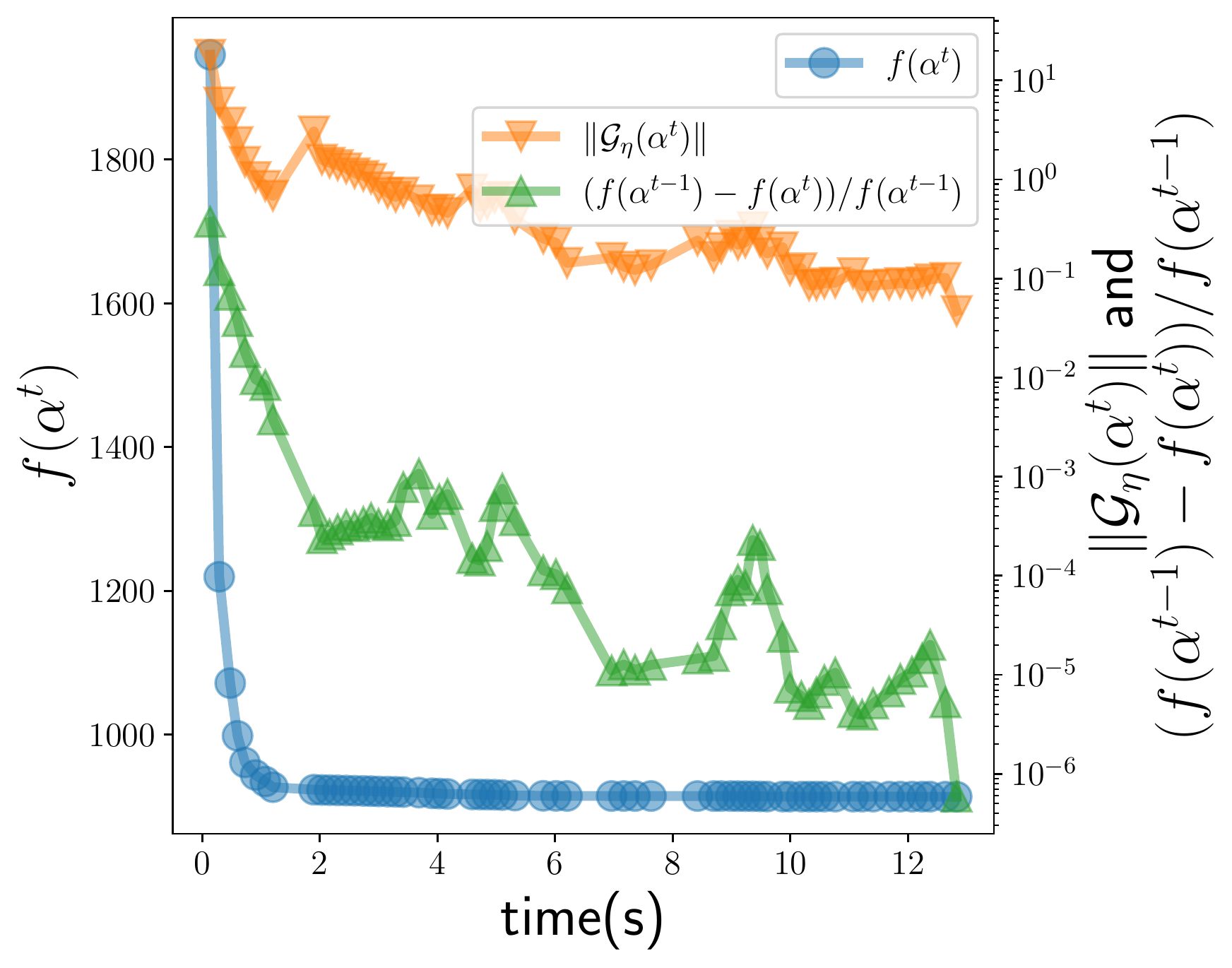}
}
\end{subfloat}
\begin{subfloat}[soc-Epinions1]{
\includegraphics[width=0.296\textwidth]{./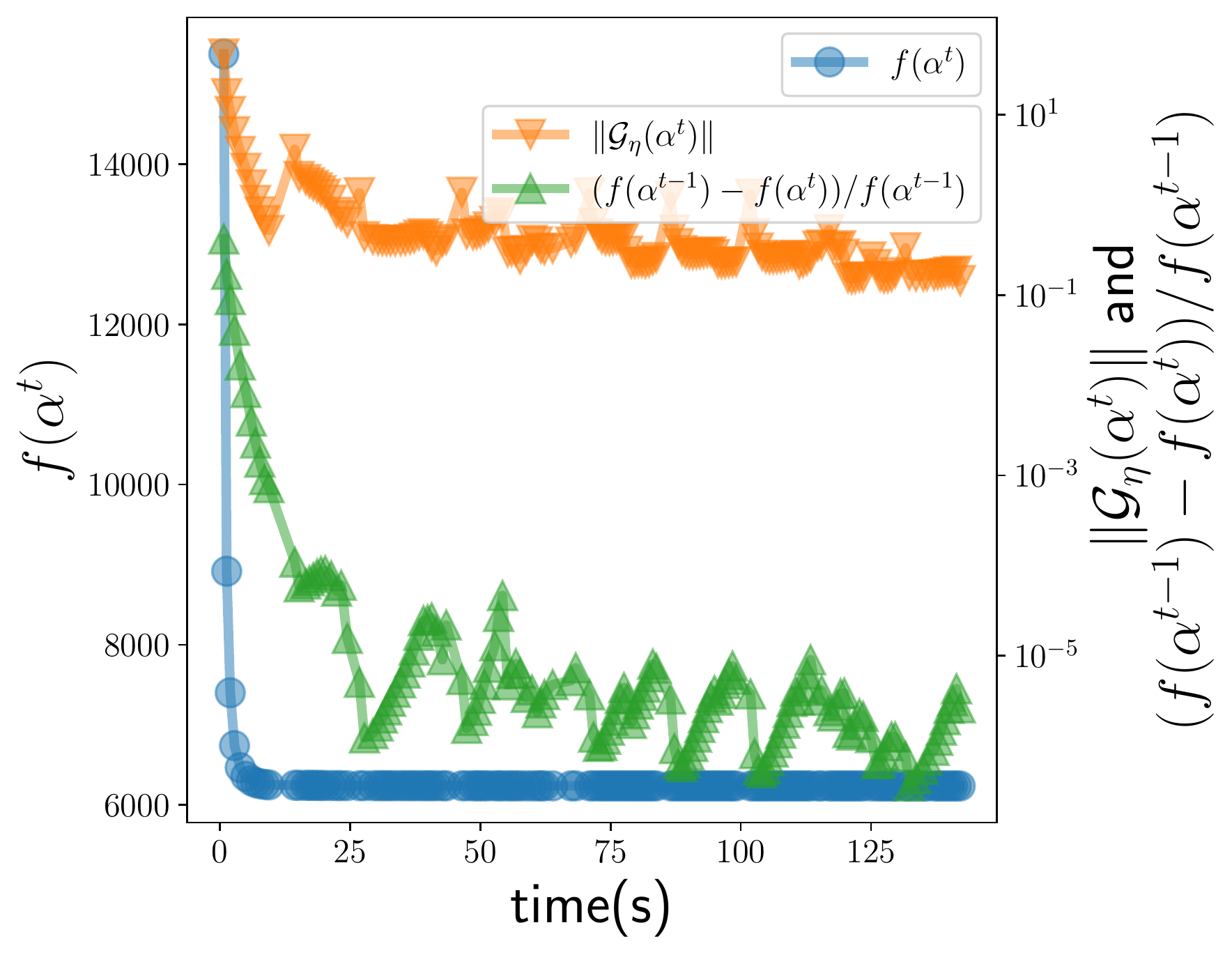}
}
\end{subfloat}
\begin{subfloat}[com-Youtube]{
\includegraphics[width=0.300\textwidth]{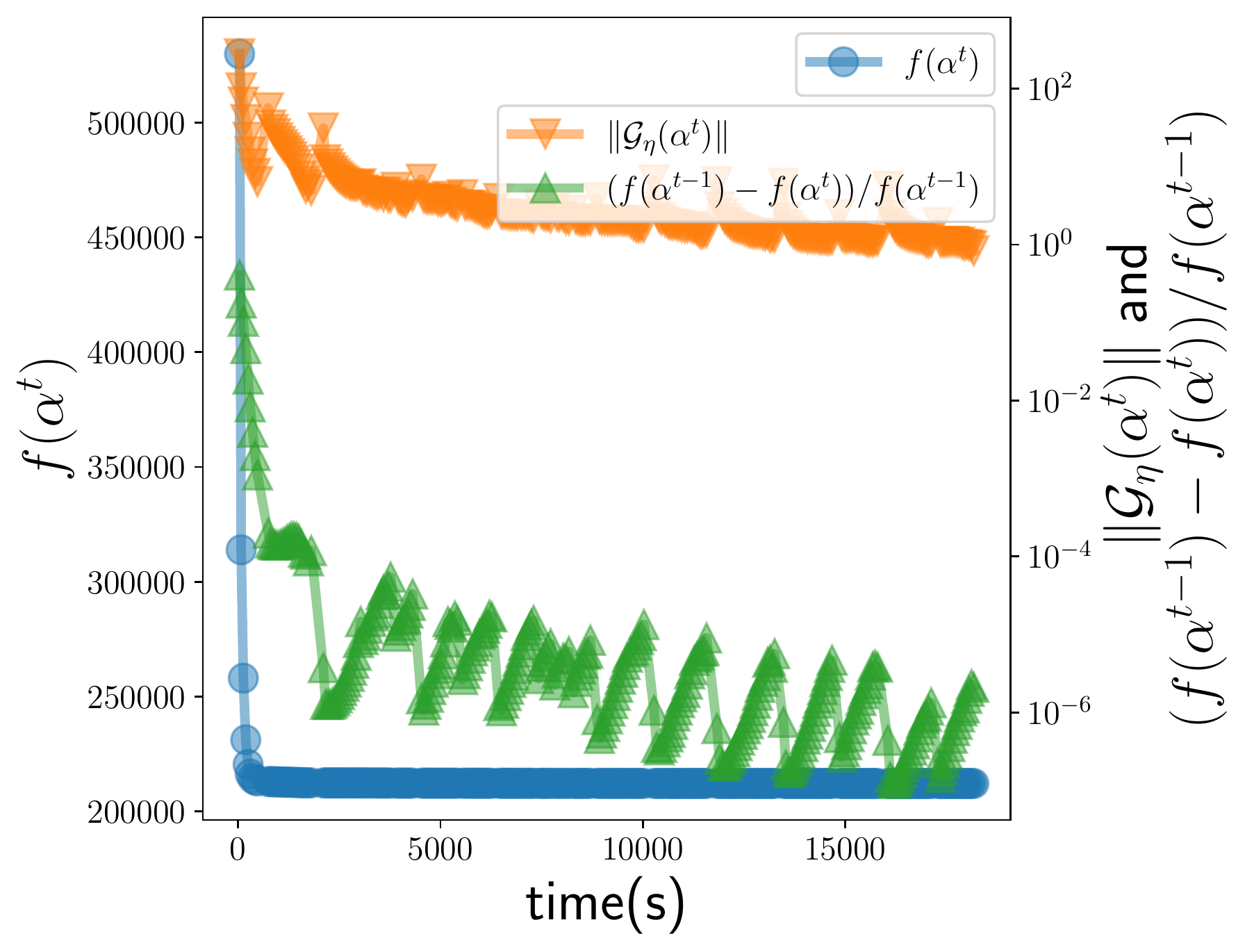}
}
\end{subfloat}
}
\caption{The behavior of our algorithm using a very conservative stopping condition ($\epsilon=10^{-3}\sqrt{|V|}$).}\label{fig:exp_pre}
\end{figure*}

\paragraph{Parameter settings.}
Our algorithm has parameters $\eta_0, \eta_\text{inc}, \epsilon, T>0$ and $\gamma <1$. 
Throughout the experiments, we set $\eta_0=1.0$, $\eta_\text{inc}=1.25$, $T=200$, and $\gamma= 0.5$. 
The remaining parameter $\epsilon$ appears in the stopping condition $\|\bm{\alpha}^t-\bm{\alpha}^{t-1}\|_2\leq \eta \epsilon$. %of the algorithm. 
To determine it appropriately, we conducted a preliminary experiments, 
by setting $\epsilon$ to a very small (i.e., conservative) value, 
say $\epsilon=10^{-3}\sqrt{|V|}$. 

The results are depicted in Figure~\ref{fig:exp_pre}. 
Due to space limitations, we present the results only for the graphs \textsf{ca-HepPh}, \textsf{soc-Epinions1}, and \textsf{com-Youtube}, 
with $c=0.5$ and the initial solution $\bm{\alpha}^\text{Chan+}$, 
but the trend is similar for the other graphs with the other parameters $c$ and the other initial solution $\bm{\alpha}^\text{init}$. 
Each point corresponds to an iteration of the for-loop of the algorithm. 

As can be seen, the objective function value steeply decreases and then keeps almost the same value 
until the condition $\|\bm{\alpha}^t-\bm{\alpha}^{t-1}\|_2\leq \eta \epsilon$ is met. 
In our experiments, we are interested in the performance of the algorithm in terms of the objective function value 
rather than whether the solution converges. 
Therefore, in our main experiments, we employ the stopping condition using the relative change of the objective function value. 
Specifically, we use the condition $(f(\bm{\alpha}^{t-1})-f(\bm{\alpha}^t))/f(\bm{\alpha}^{t-1})\leq 10^{-3}$. 
%where $\bm{\alpha}$ and $\bm{\alpha}_\text{prev}$ denote solutions in the current and previous iterations, respectively. 

The way to choose an initial (feasible) solution in the proposed algorithm is arbitrary. 
In our experiments, we employ the following two solutions: 
the solution generated by projecting $\bm{\alpha}^\text{Chan+}$ onto $\mathcal{C}$ (i.e., $\text{proj}_\mathcal{C}(\bm{\alpha}^\text{Chan+})$) 
%the solution generated by projecting $\bm{\alpha}^\text{Chan+}$ onto the feasible region $\mathcal{C}$ (i.e., $\text{proj}_\mathcal{C}(\bm{\alpha}^\text{Chan+})$) 
and the initial resistance vector $\bm{\alpha}^\text{init}$.

\begin{figure*}
\begin{subfloat}[Lesmis]{
\includegraphics[width=0.22\textwidth]{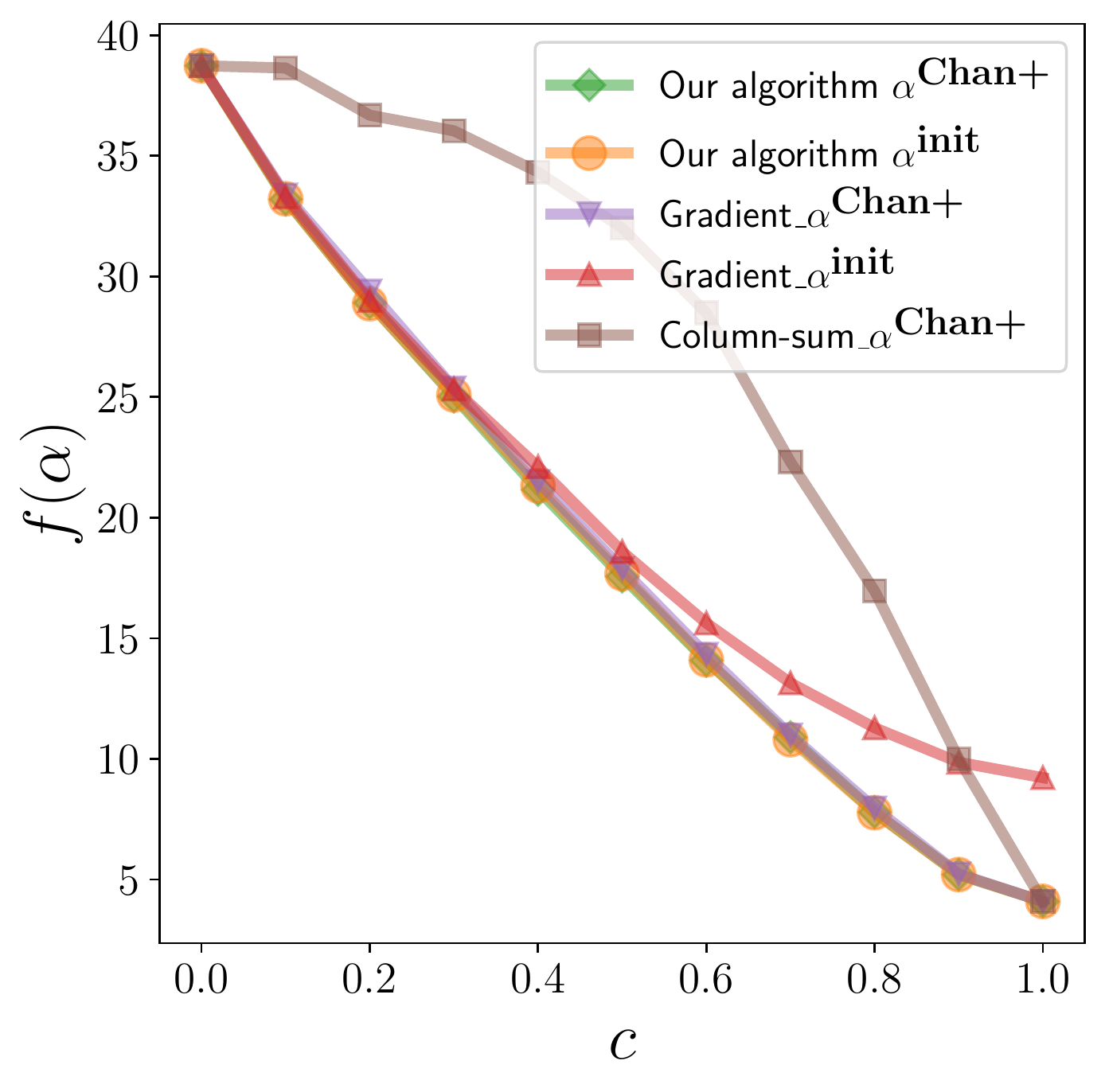}
\includegraphics[width=0.22\textwidth]{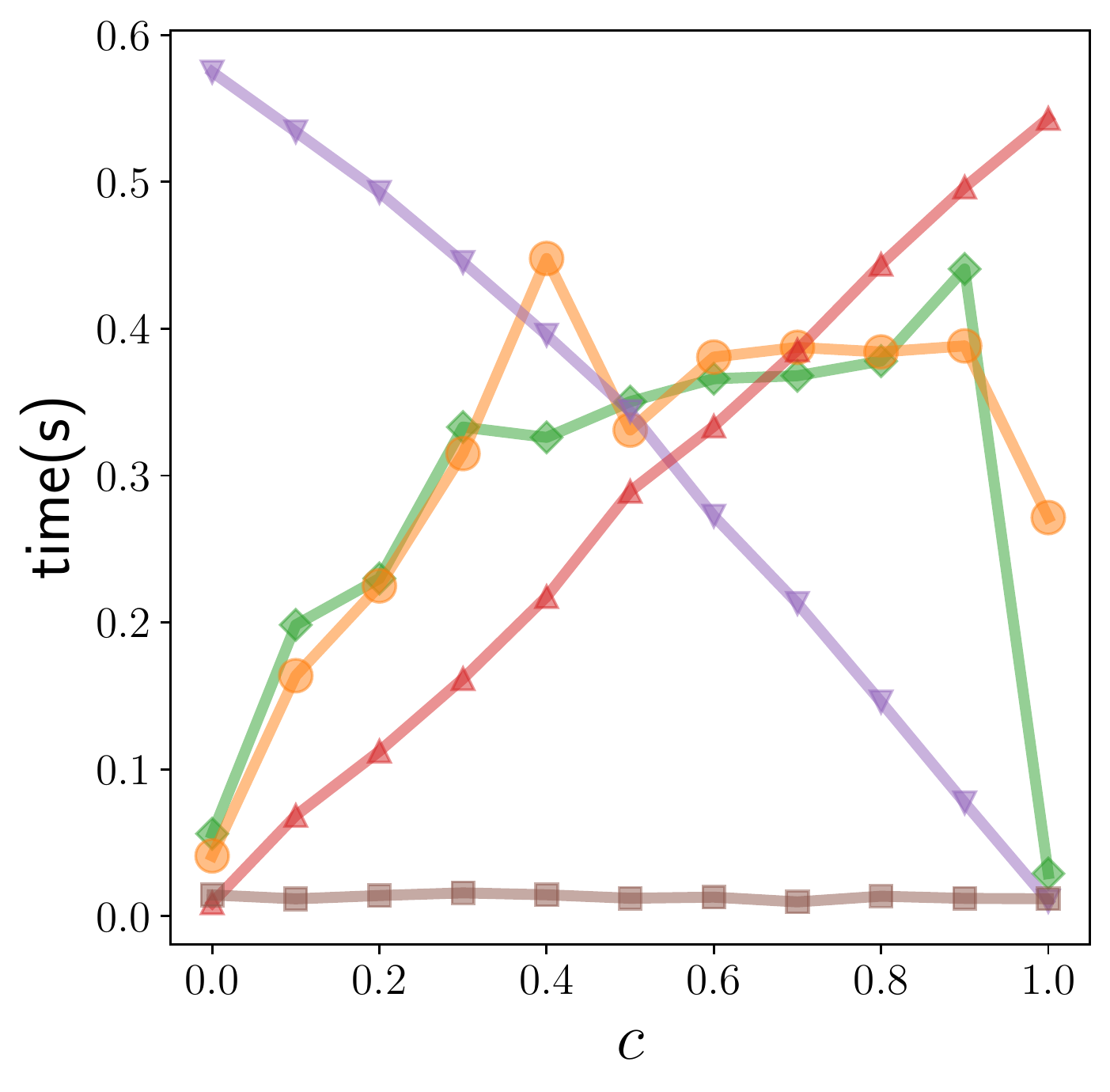}
}
\end{subfloat}
\begin{subfloat}[Jazz]{
\includegraphics[width=0.22\textwidth]{./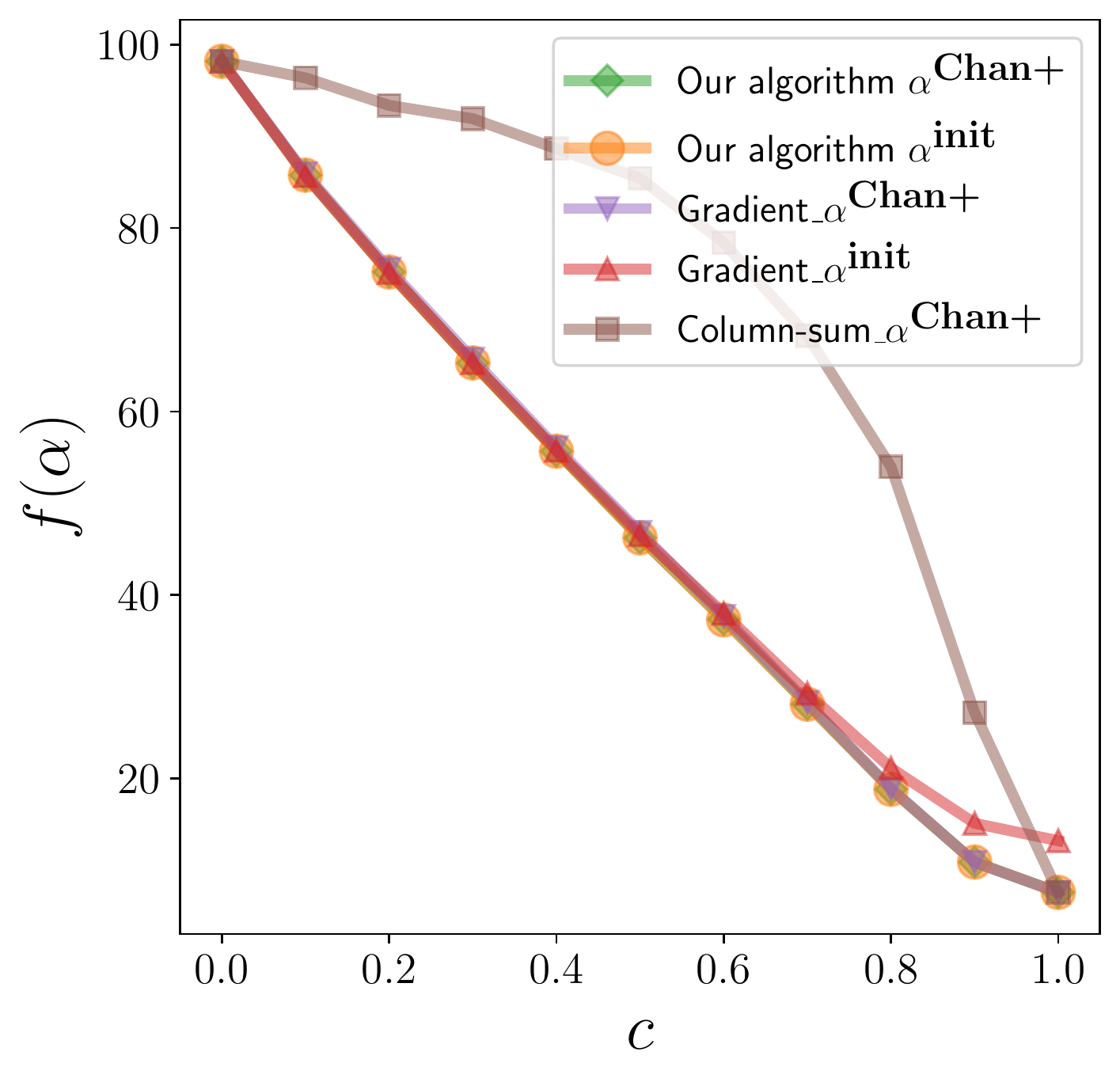}
\includegraphics[width=0.22\textwidth]{./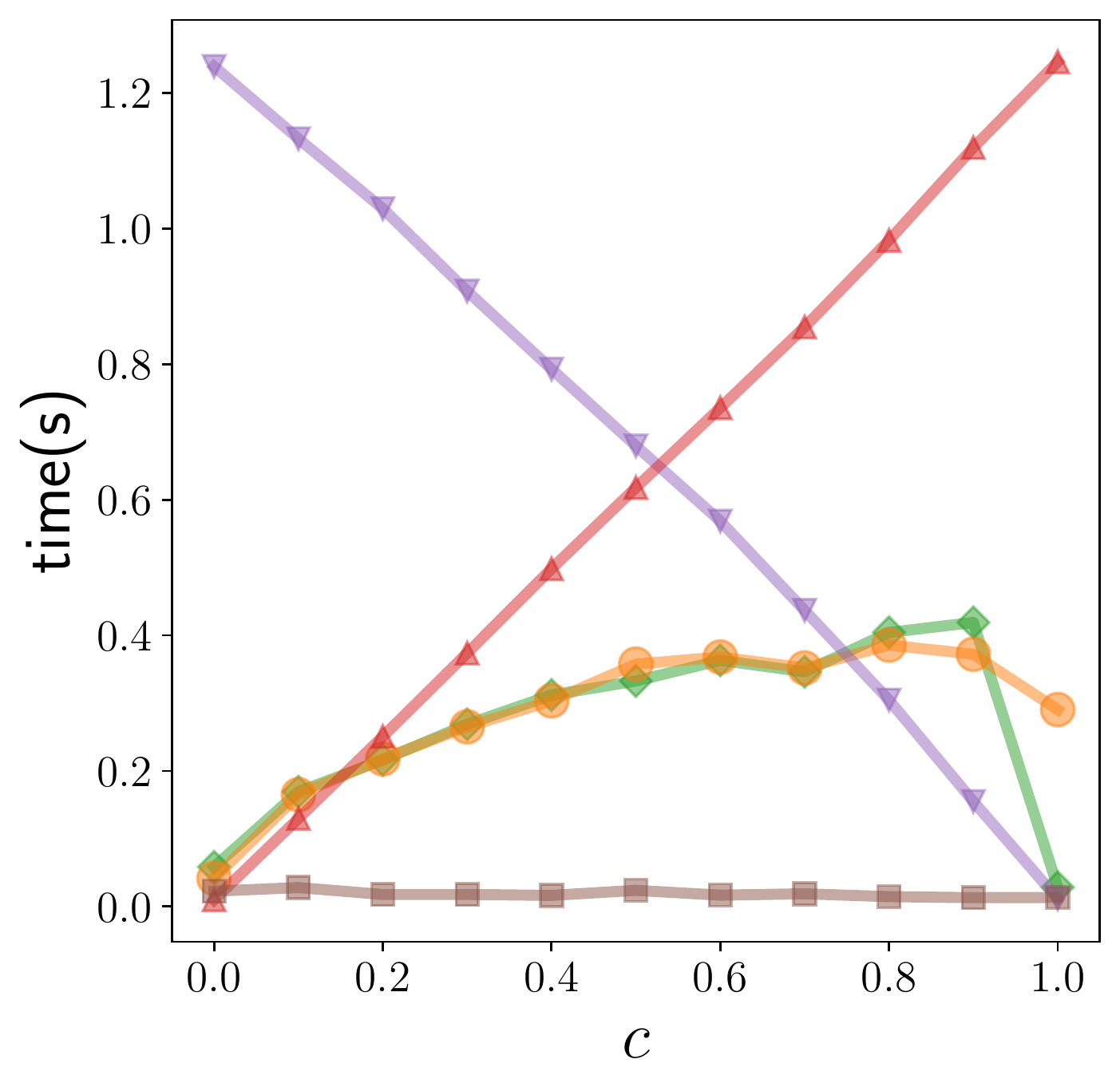}
}
\end{subfloat}
\begin{subfloat}[Email]{
\includegraphics[width=0.22\textwidth]{./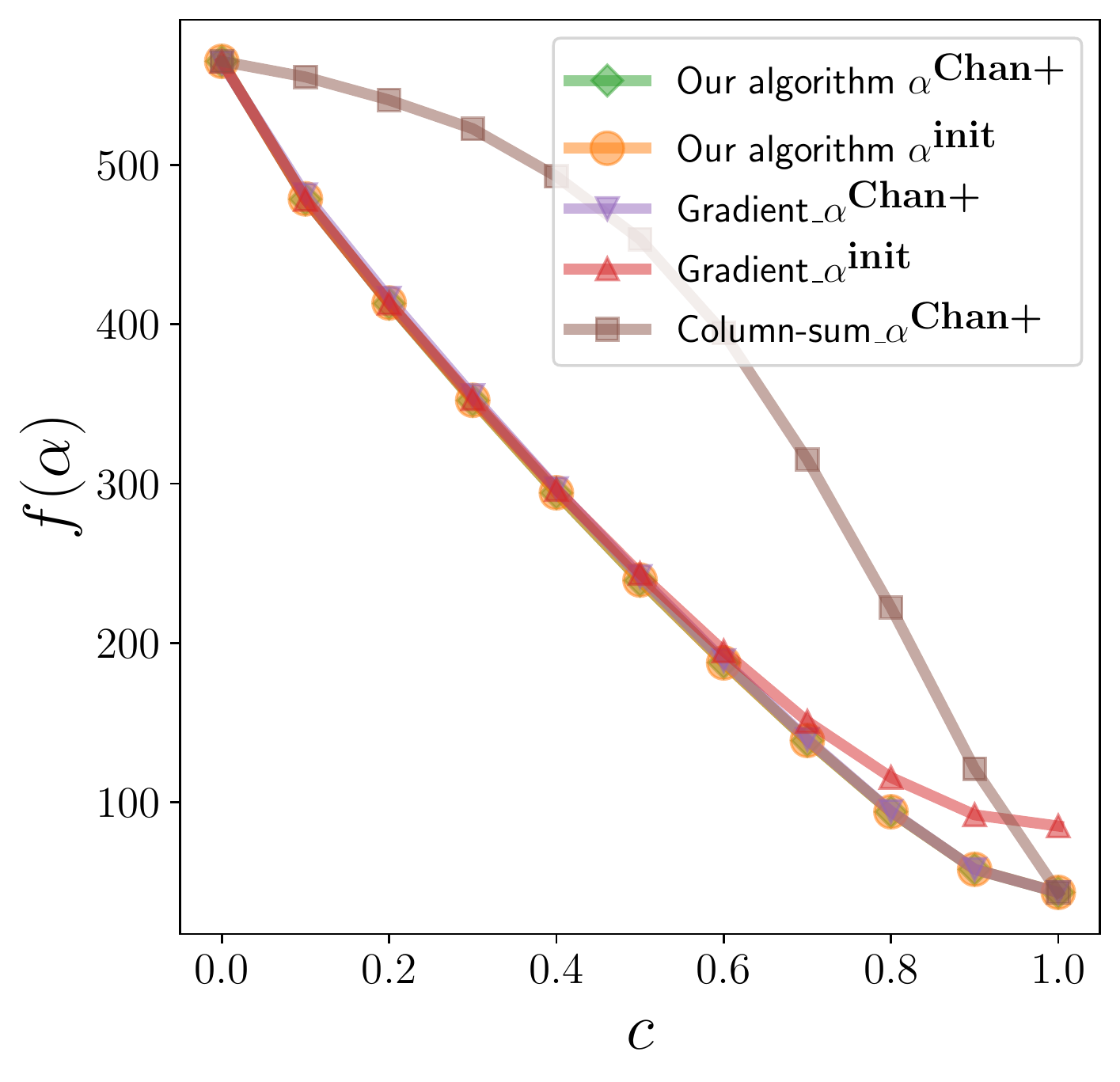}
\includegraphics[width=0.215\textwidth]{./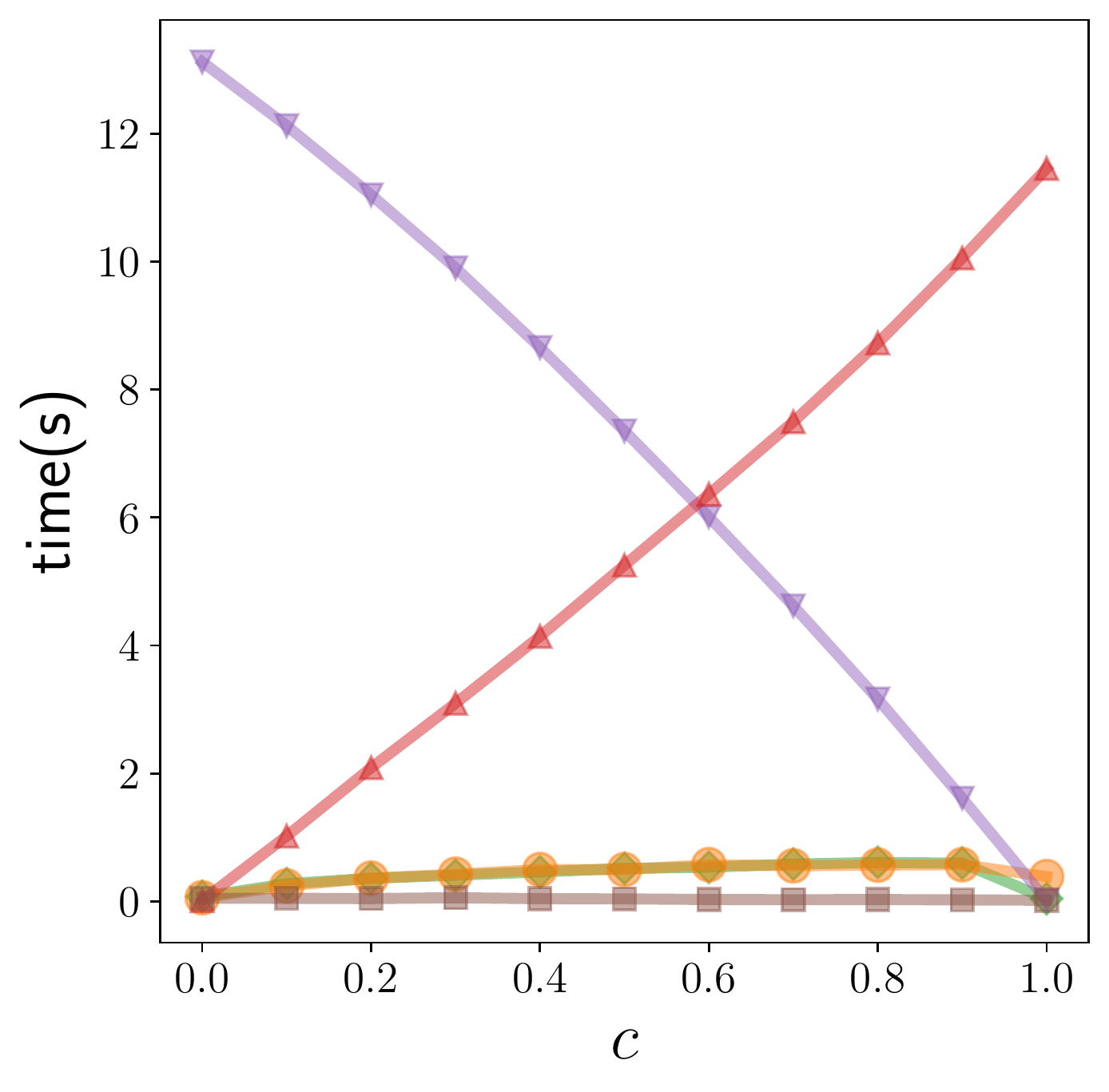}
}
\end{subfloat}
\begin{subfloat}[ca-GrQc]{
\includegraphics[width=0.22\textwidth]{./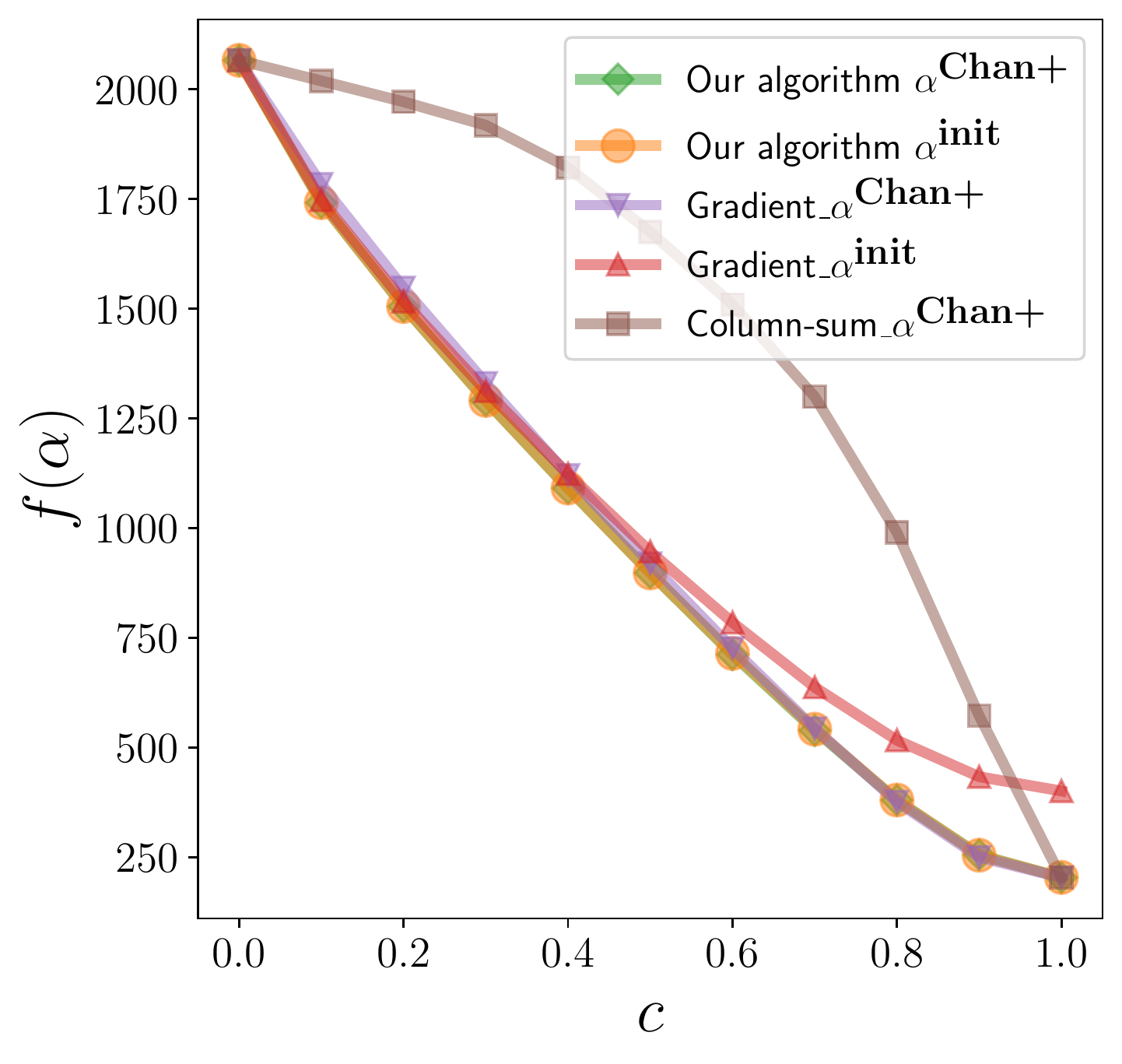}
\includegraphics[width=0.215\textwidth]{./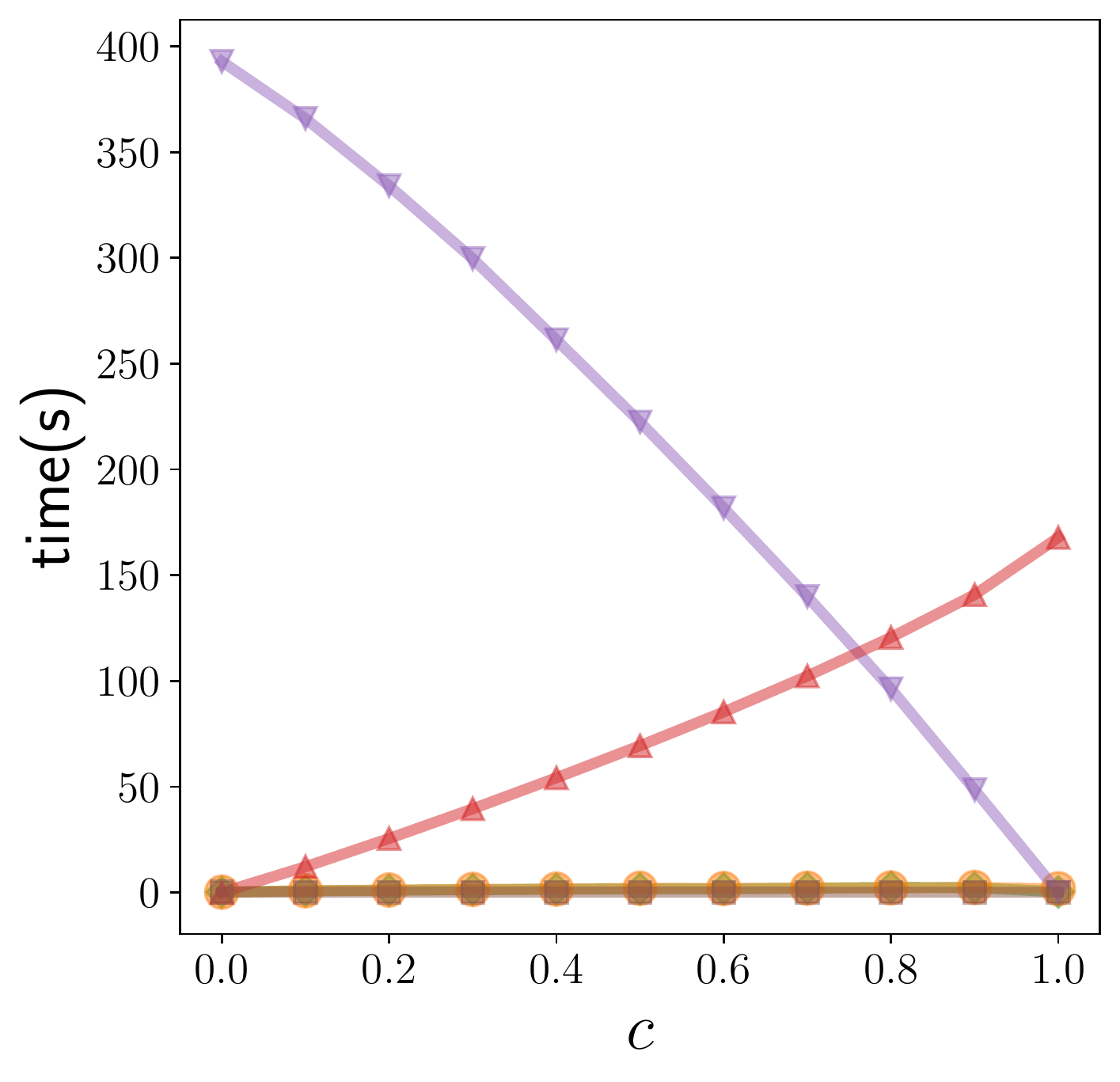}
}
\end{subfloat}
\begin{subfloat}[ca-HepPh]{
\includegraphics[width=0.22\textwidth]{./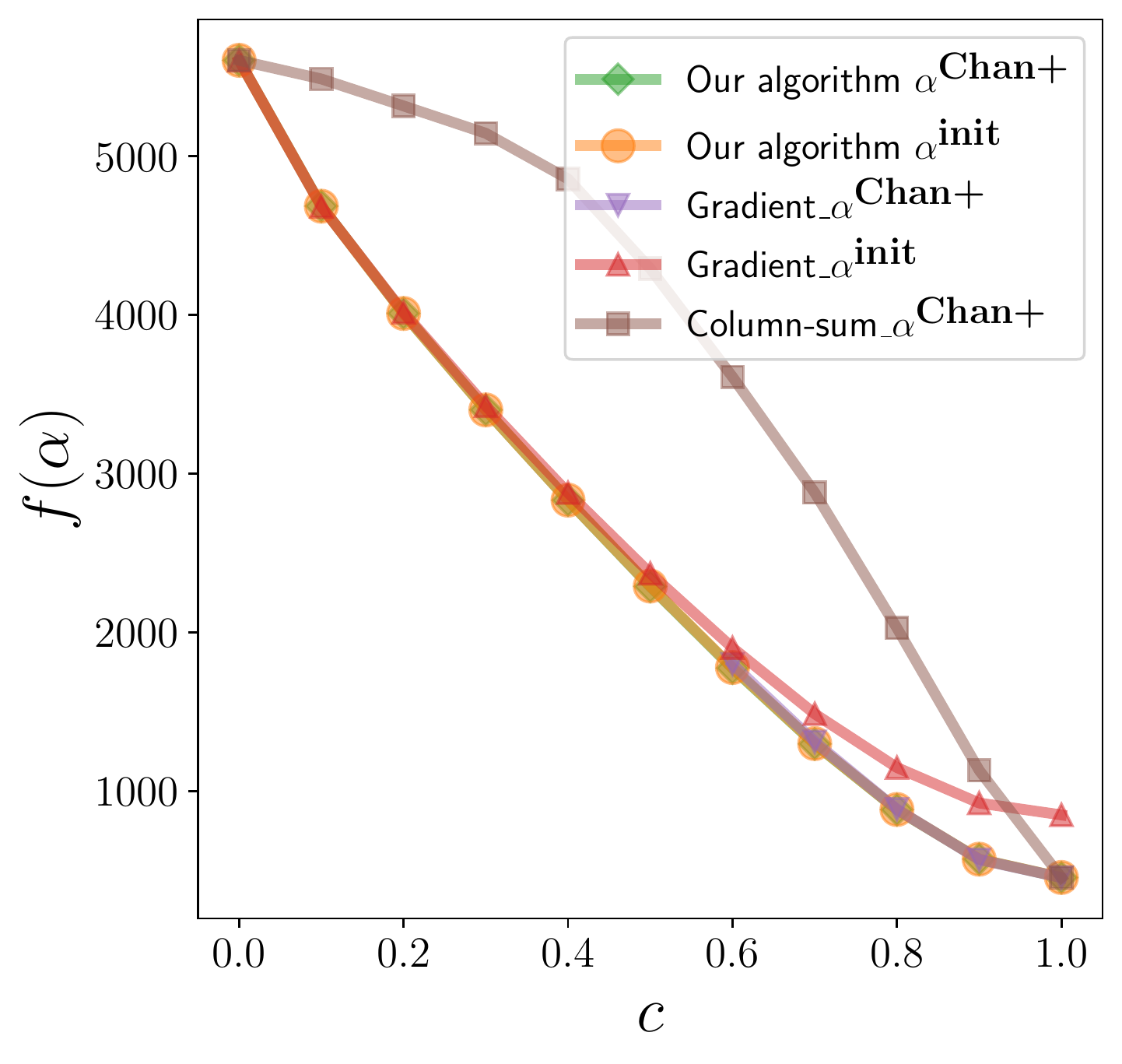}
\includegraphics[width=0.22\textwidth]{./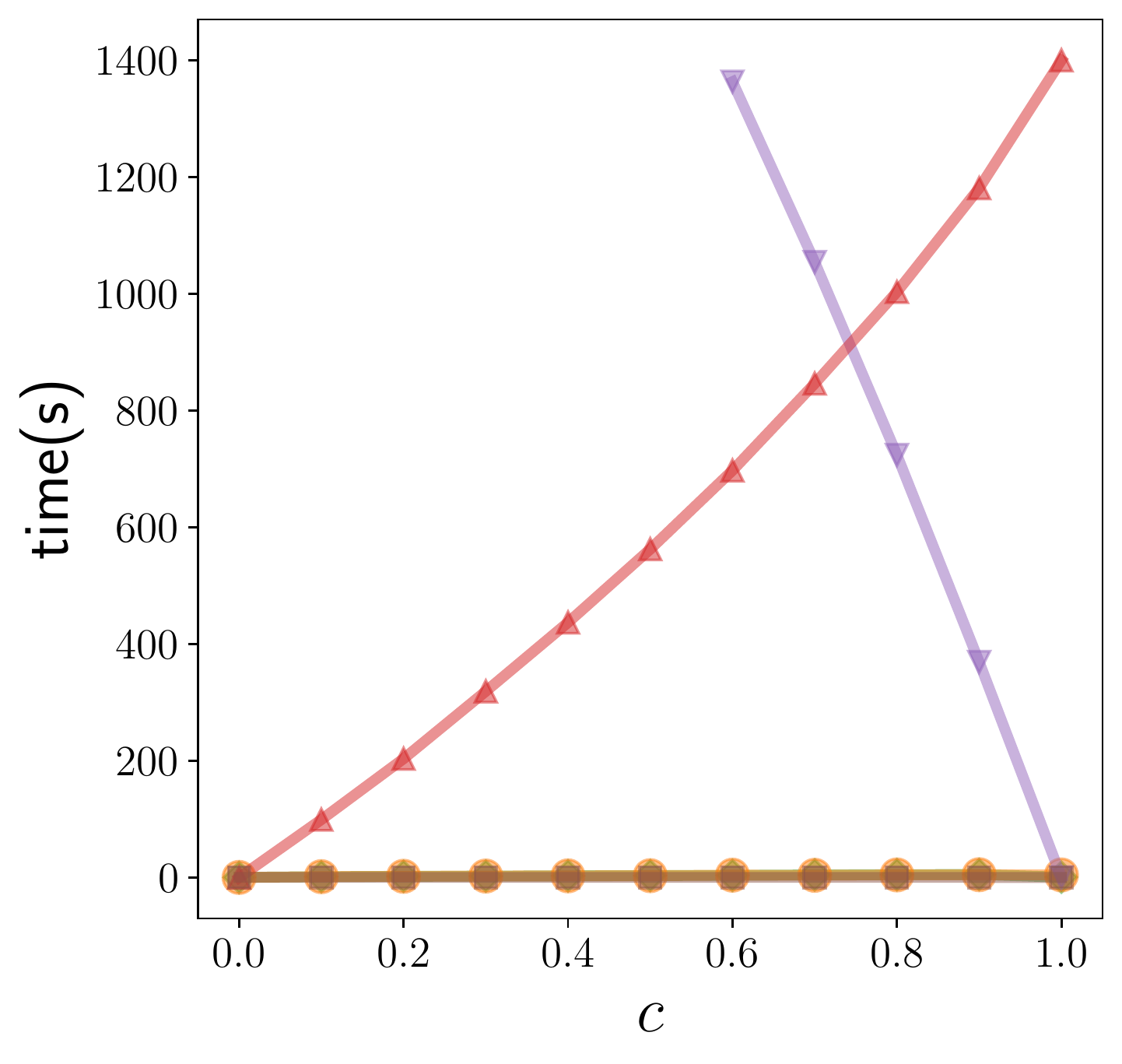}
}
\end{subfloat}
\begin{subfloat}[ca-AstroPh]{
\includegraphics[width=0.22\textwidth]{./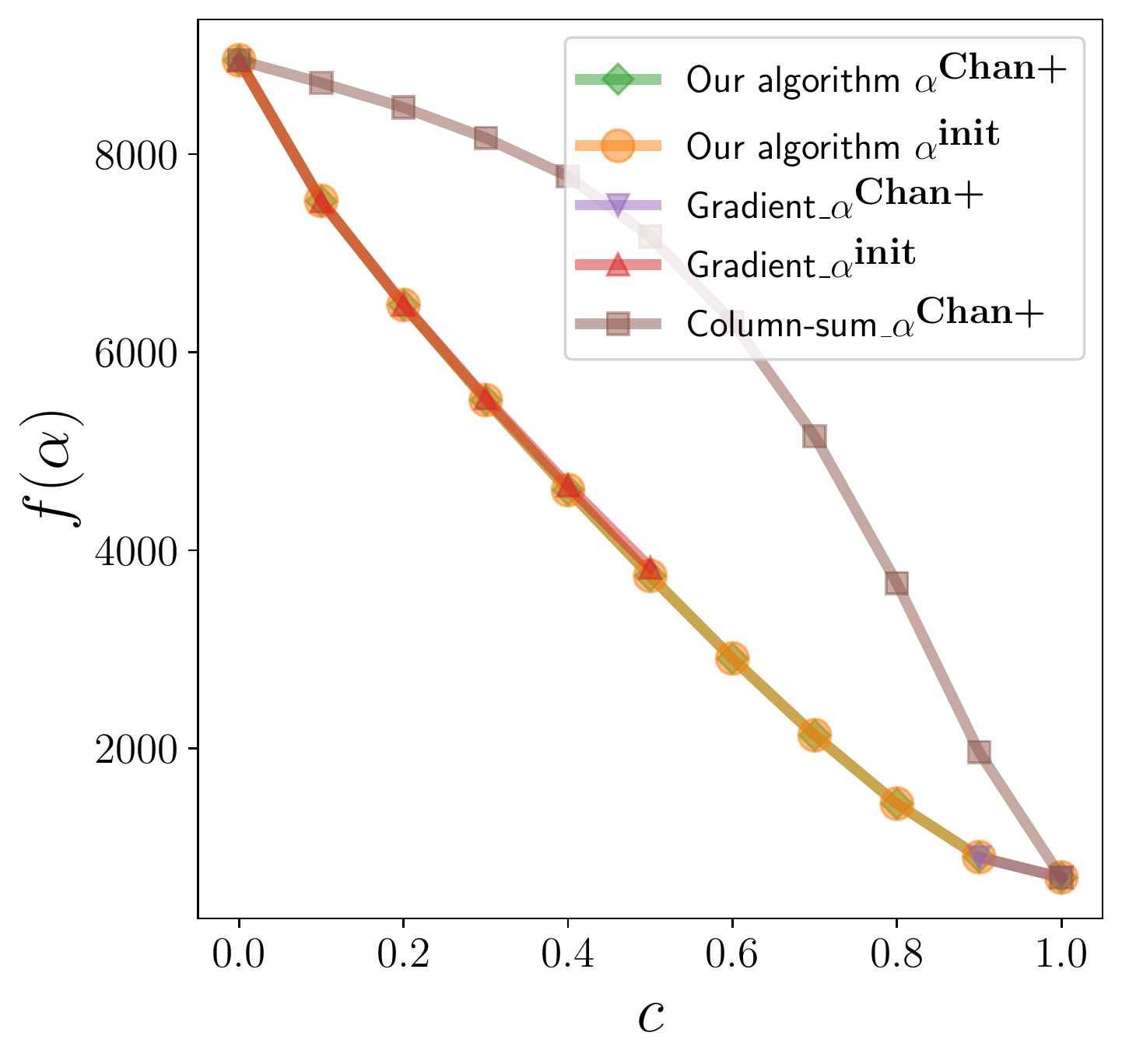}
\includegraphics[width=0.22\textwidth]{./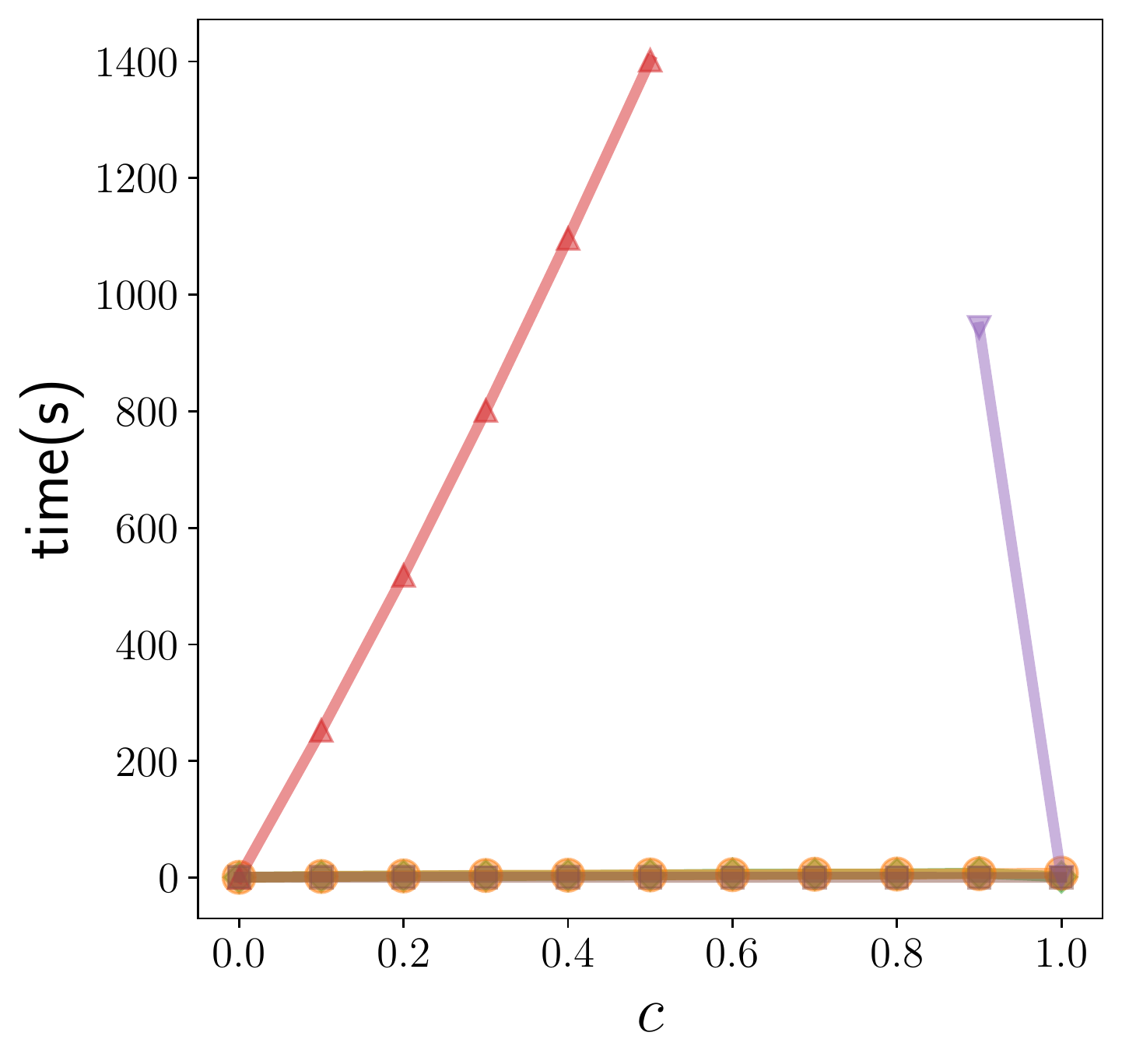}
}
\end{subfloat}
\begin{subfloat}[ca-CondMat]{
\includegraphics[width=0.22\textwidth]{./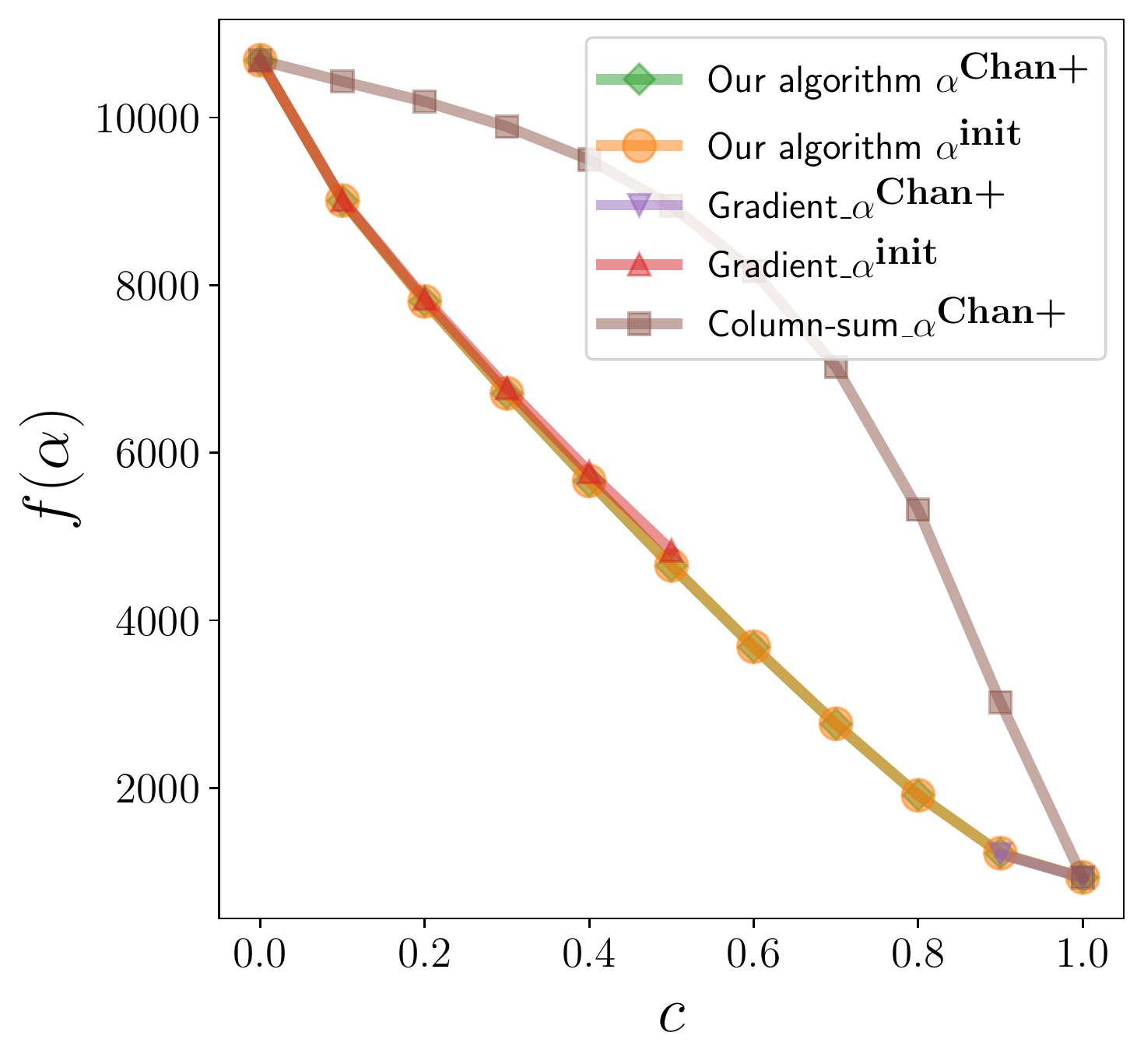}
\includegraphics[width=0.215\textwidth]{./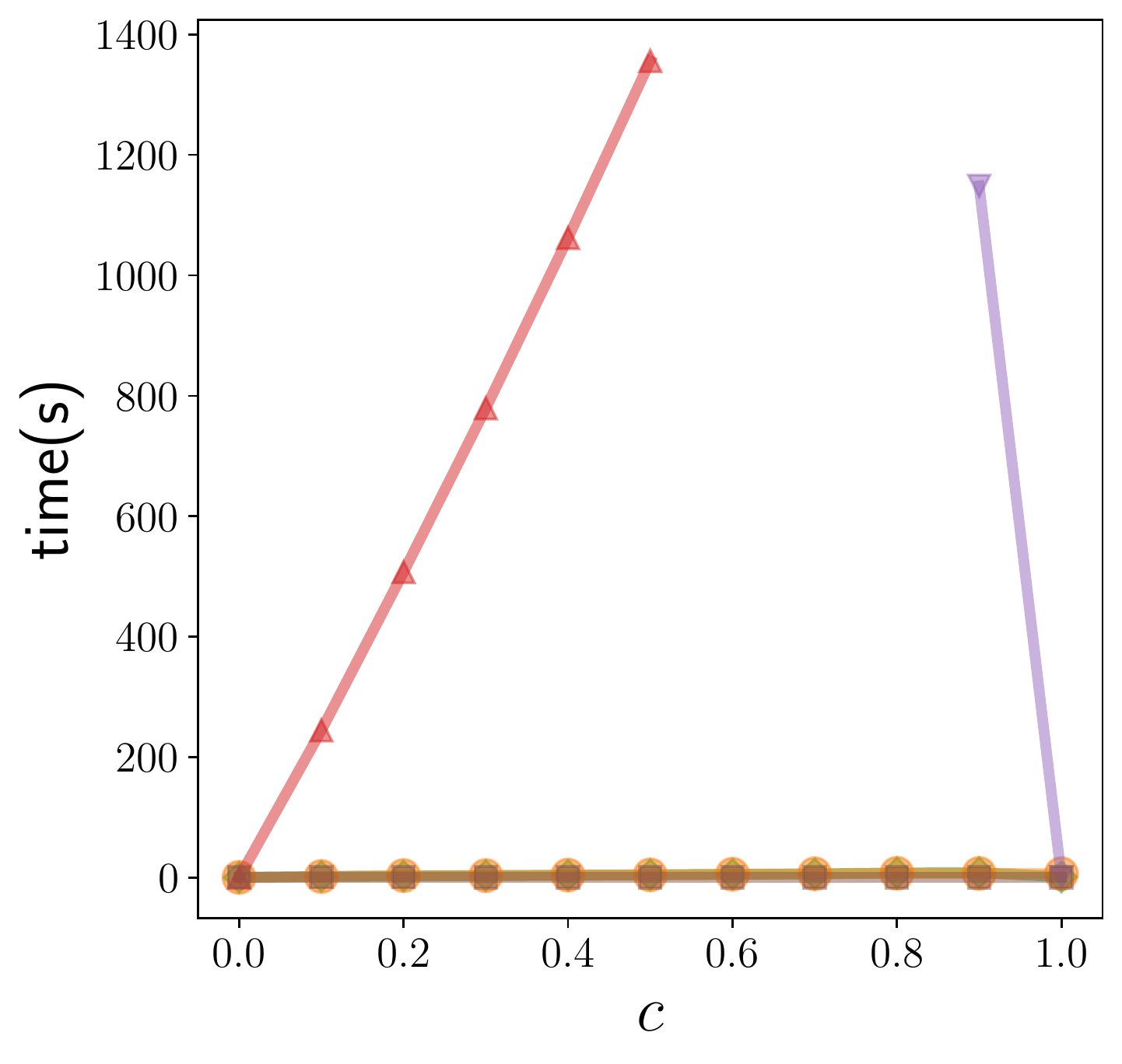}
}
\end{subfloat}
\begin{subfloat}[email-Enron]{
\includegraphics[width=0.22\textwidth]{./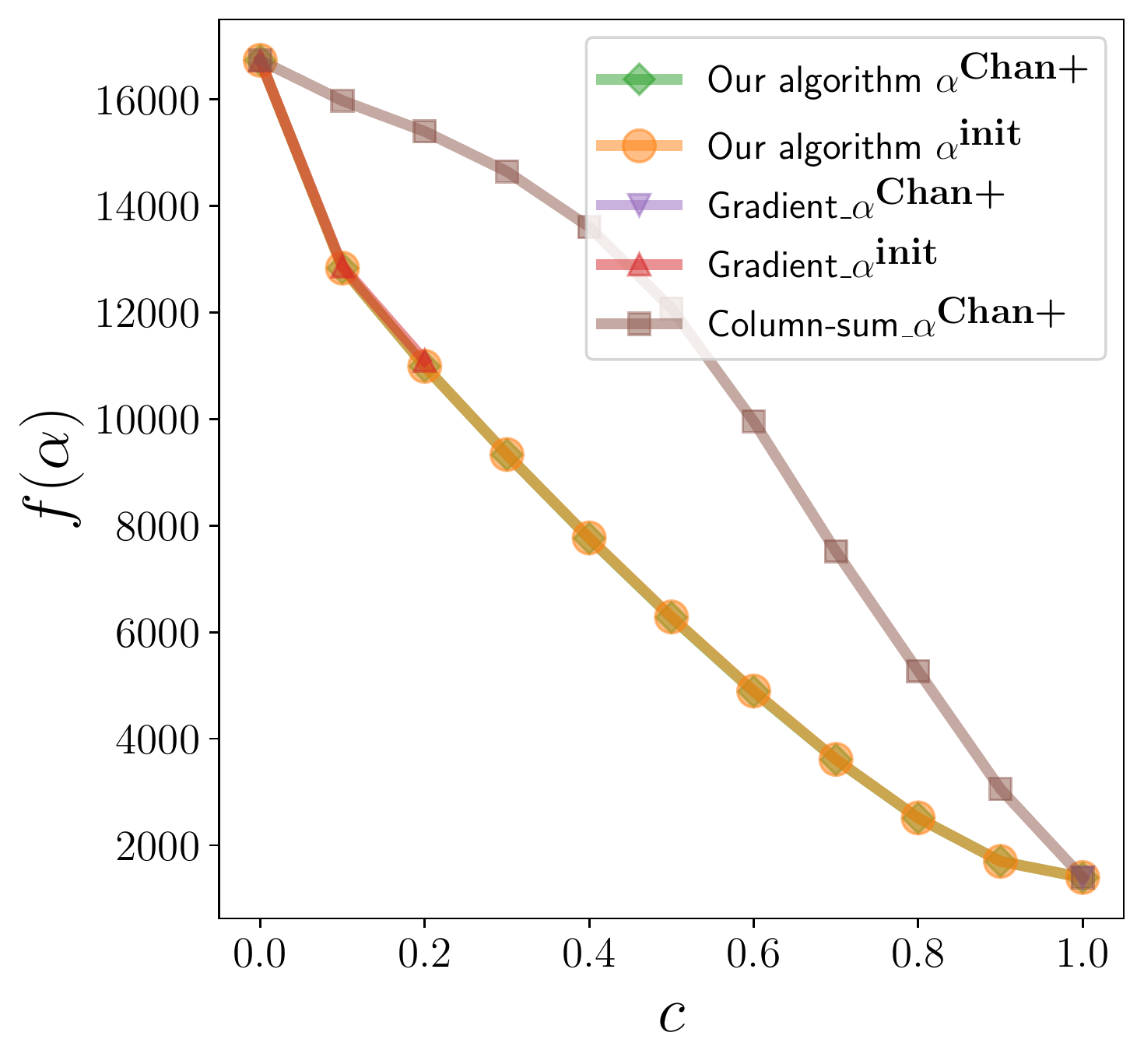}
\includegraphics[width=0.215\textwidth]{./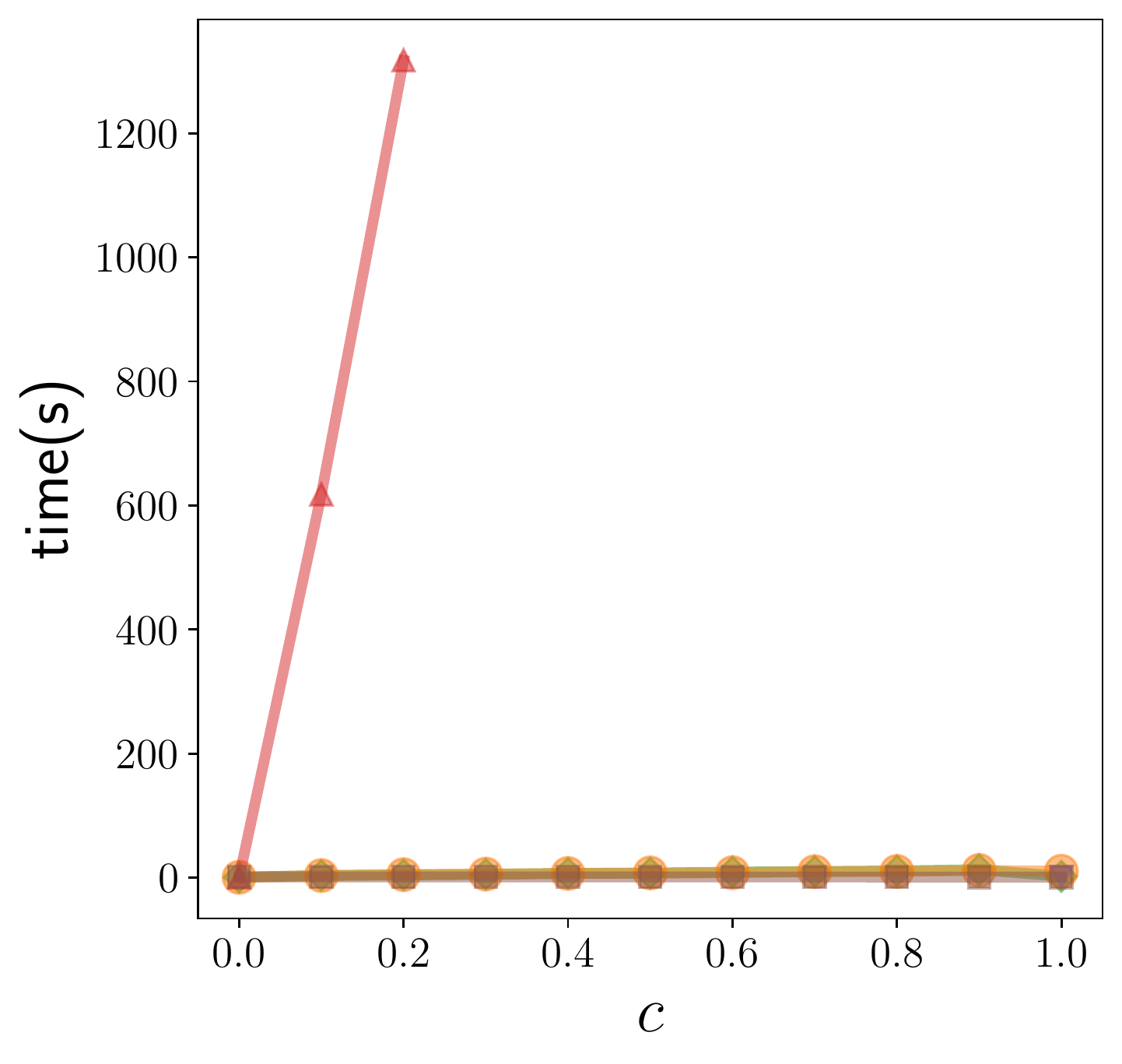}
}
\end{subfloat}
\caption{Results with $p=1$ for the first eight graphs.}\label{fig:result_1}
\end{figure*}

\begin{figure*}
\begin{subfloat}[soc-Epinions1]{
\includegraphics[width=0.22\textwidth]{./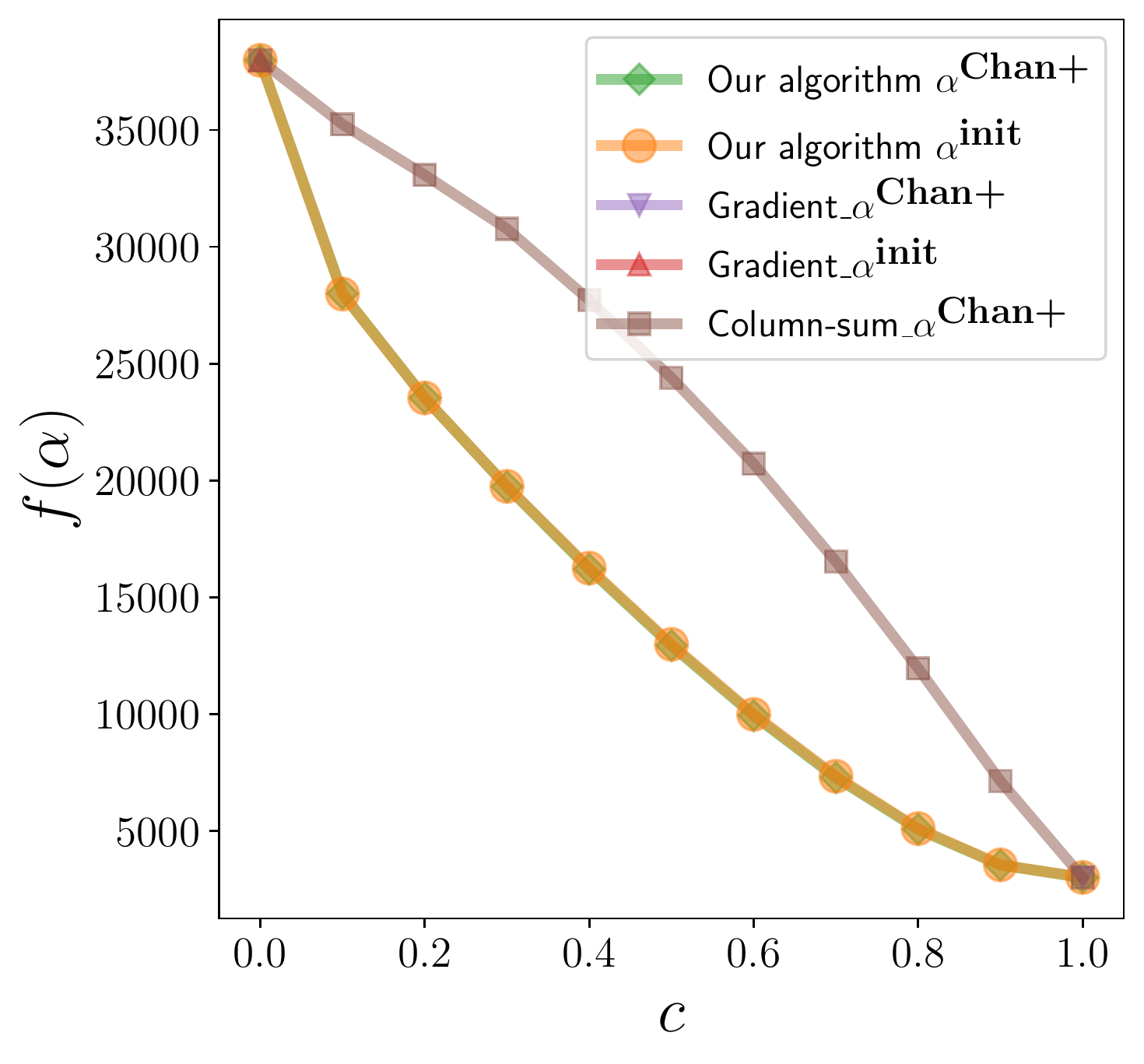}
\includegraphics[width=0.208\textwidth]{./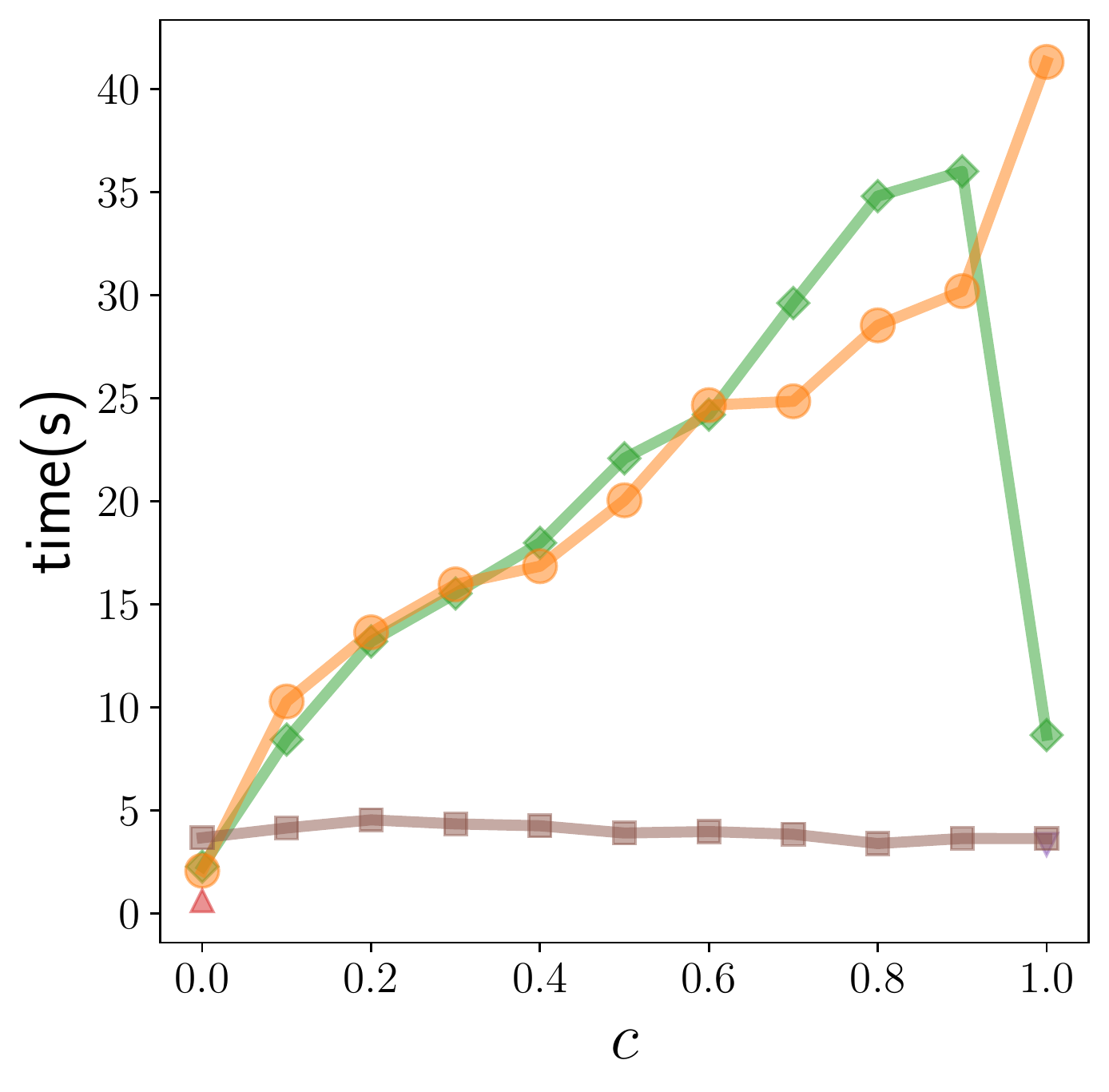}
}
\end{subfloat}
\begin{subfloat}[soc-Slashdot0902]{
\includegraphics[width=0.22\textwidth]{./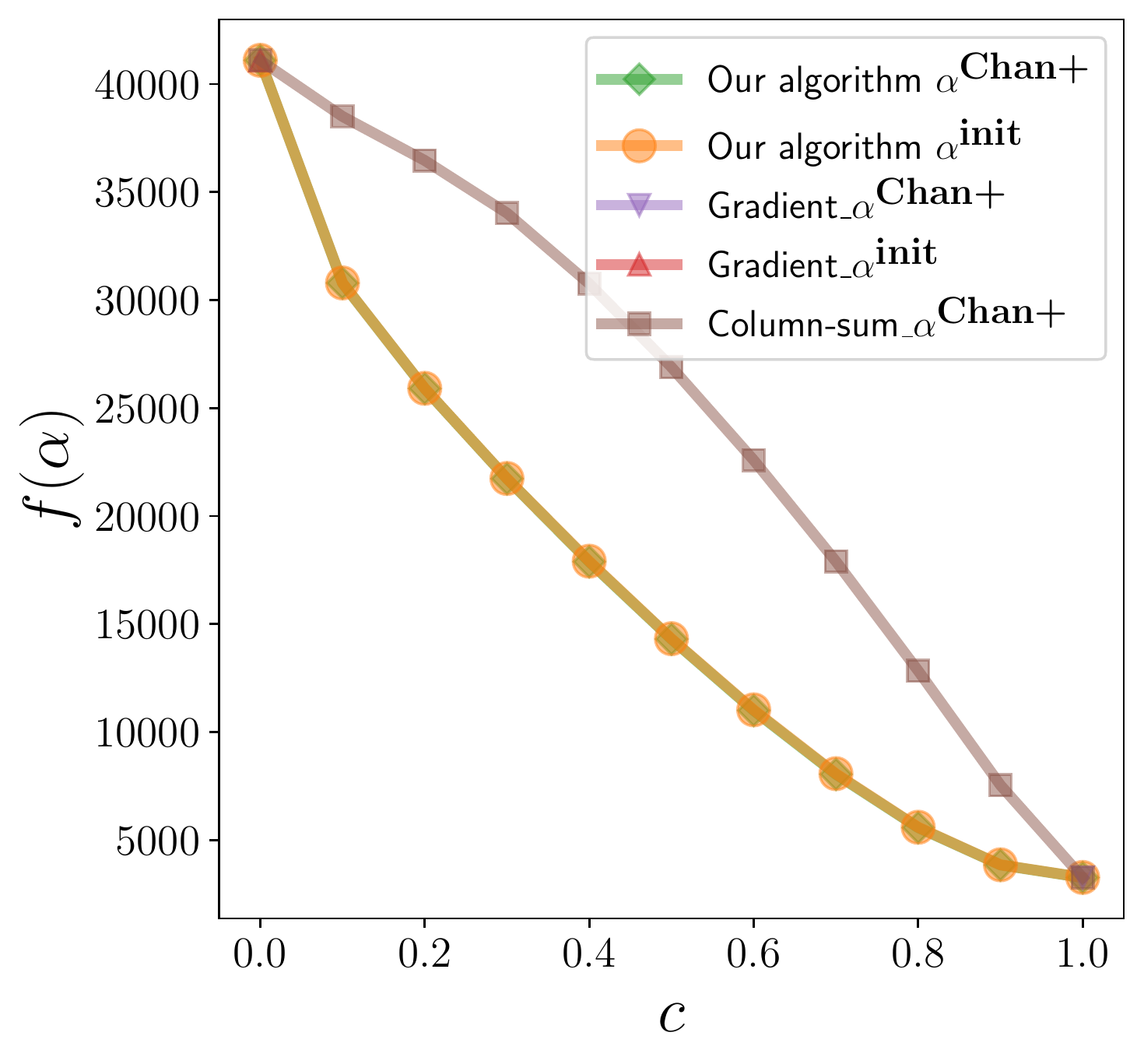}
\includegraphics[width=0.208\textwidth]{./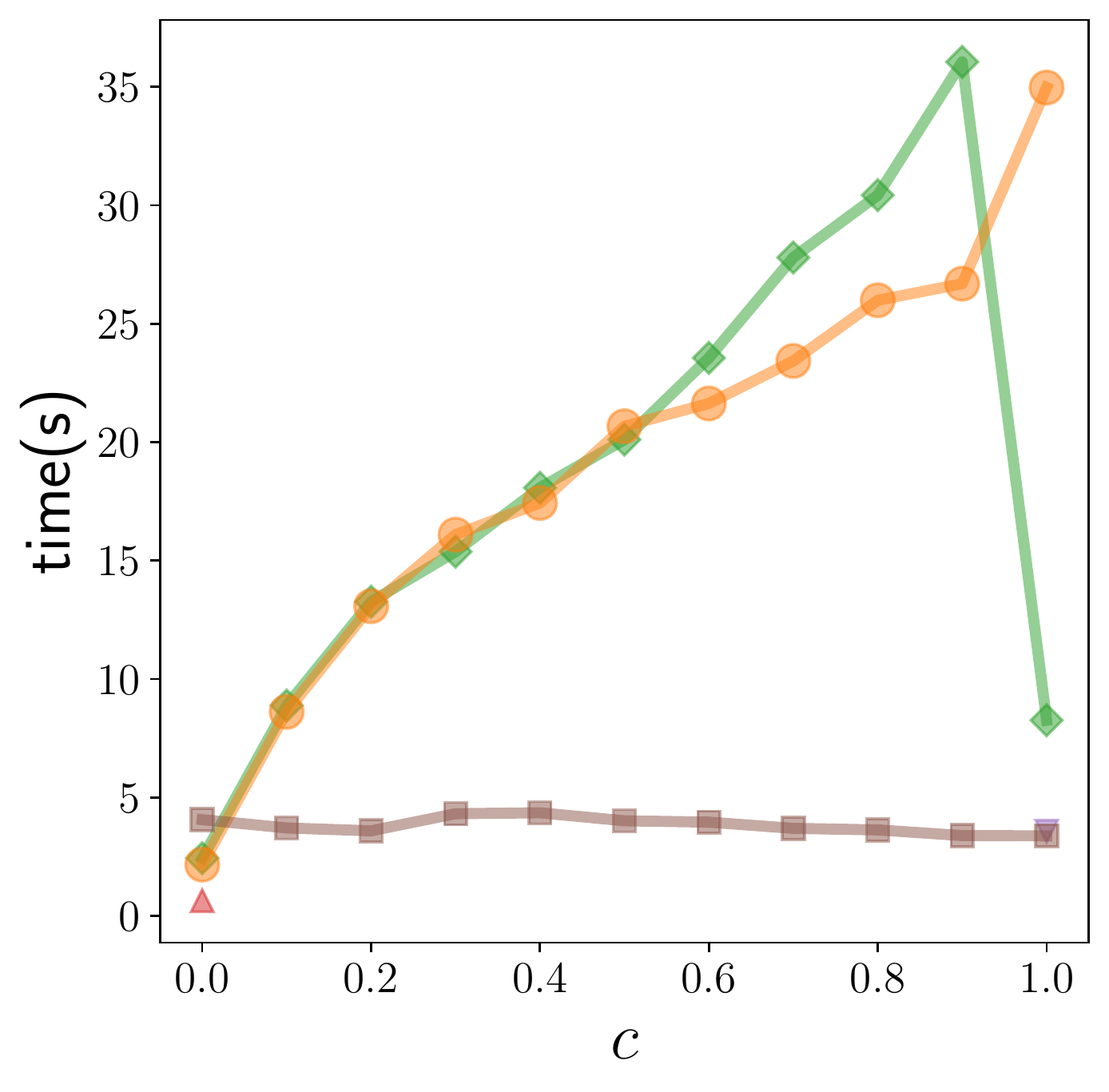}
}
\end{subfloat}
\begin{subfloat}[com-DBLP]{
\includegraphics[width=0.22\textwidth]{./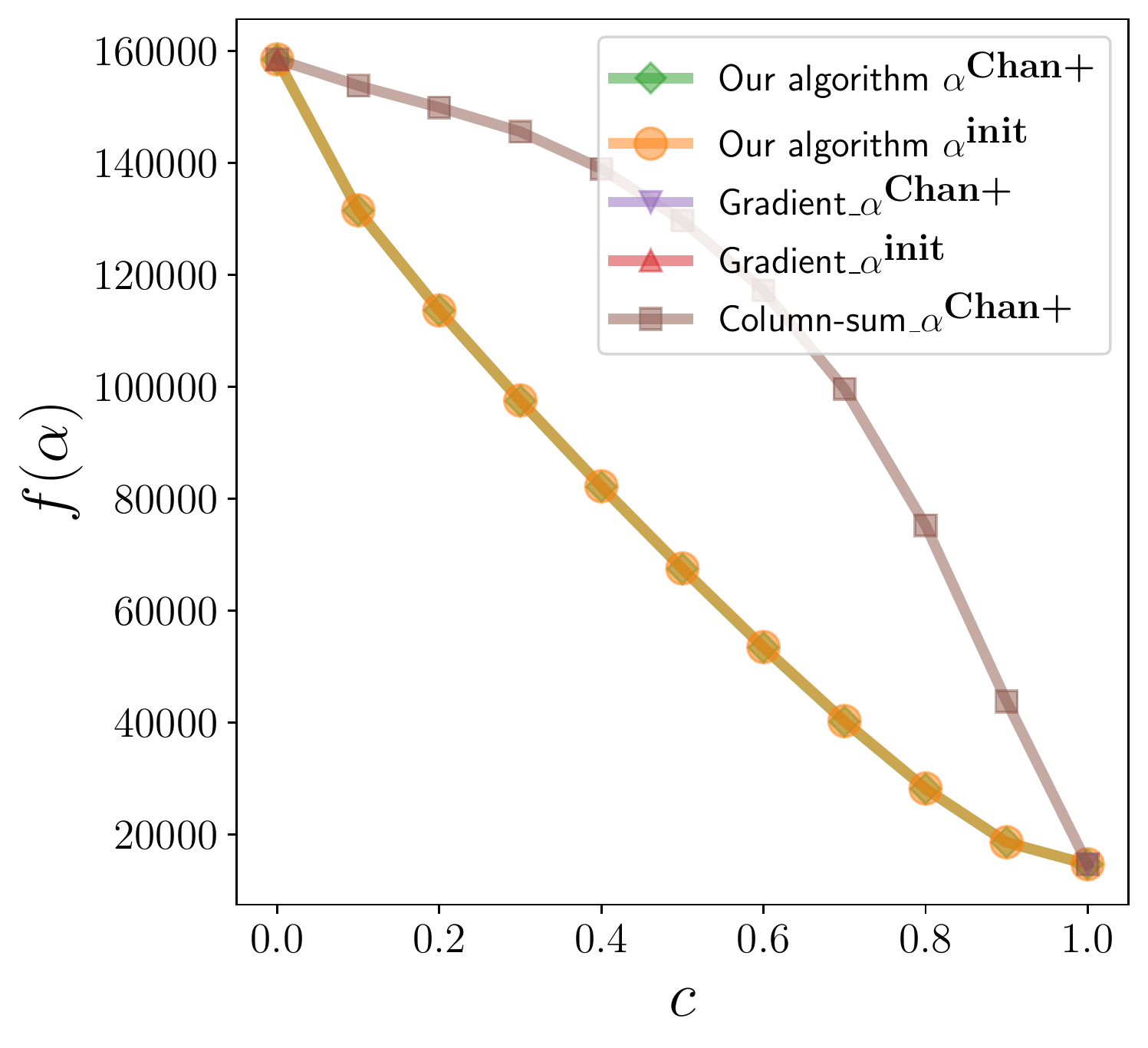}
\includegraphics[width=0.208\textwidth]{./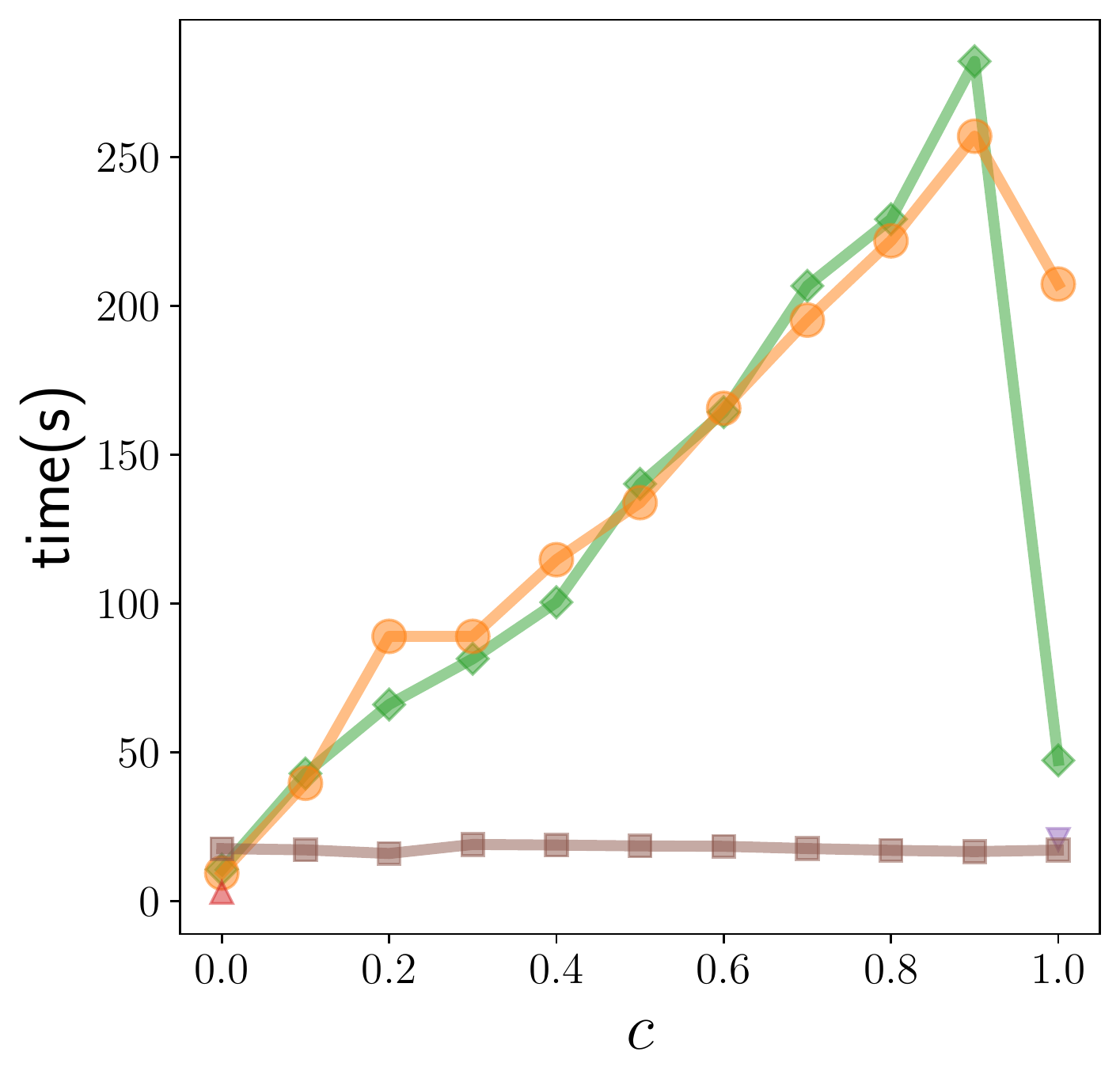}
}
\end{subfloat}
\begin{subfloat}[com-Youtube]{
\includegraphics[width=0.22\textwidth]{./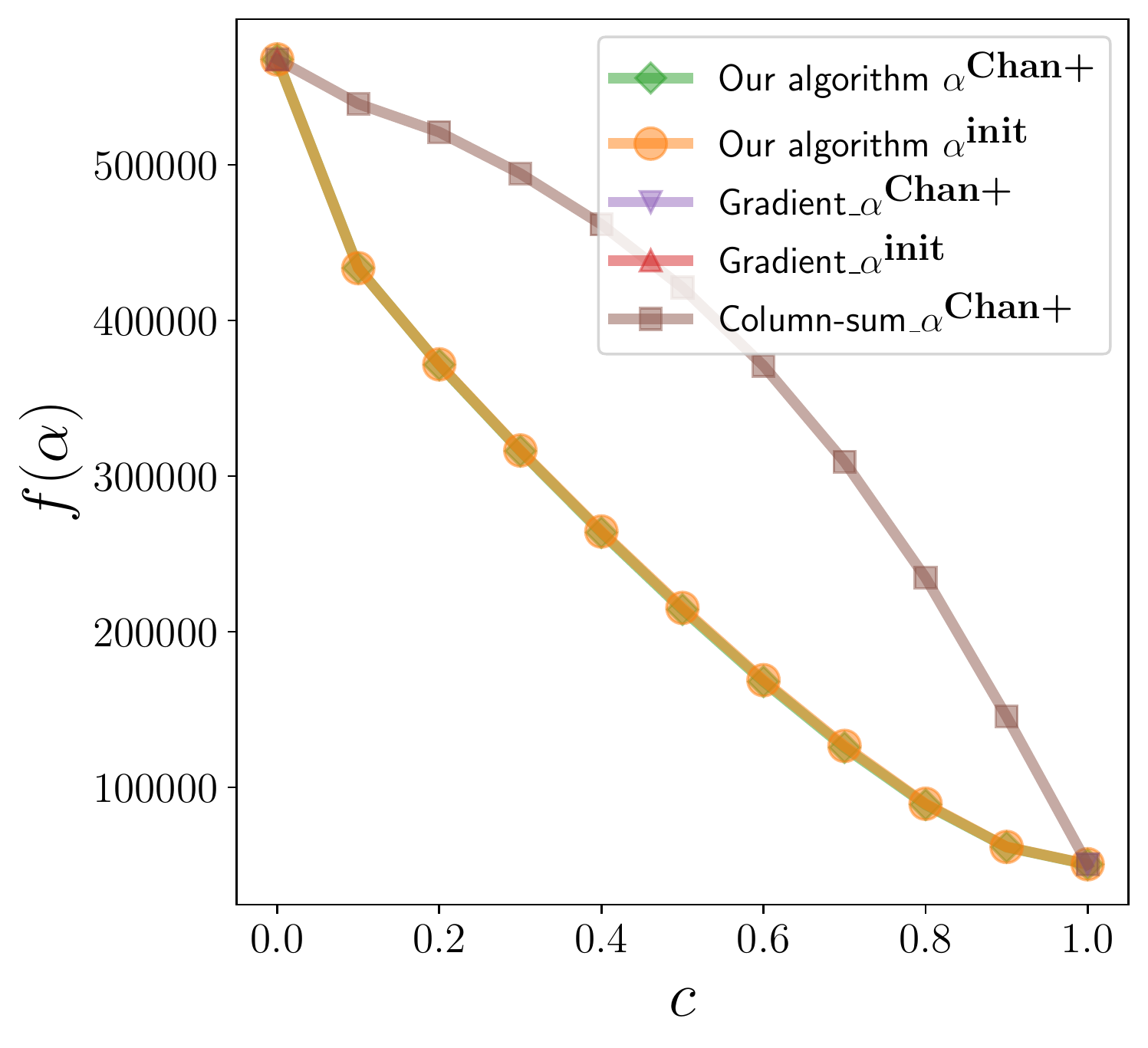}
\includegraphics[width=0.212\textwidth]{./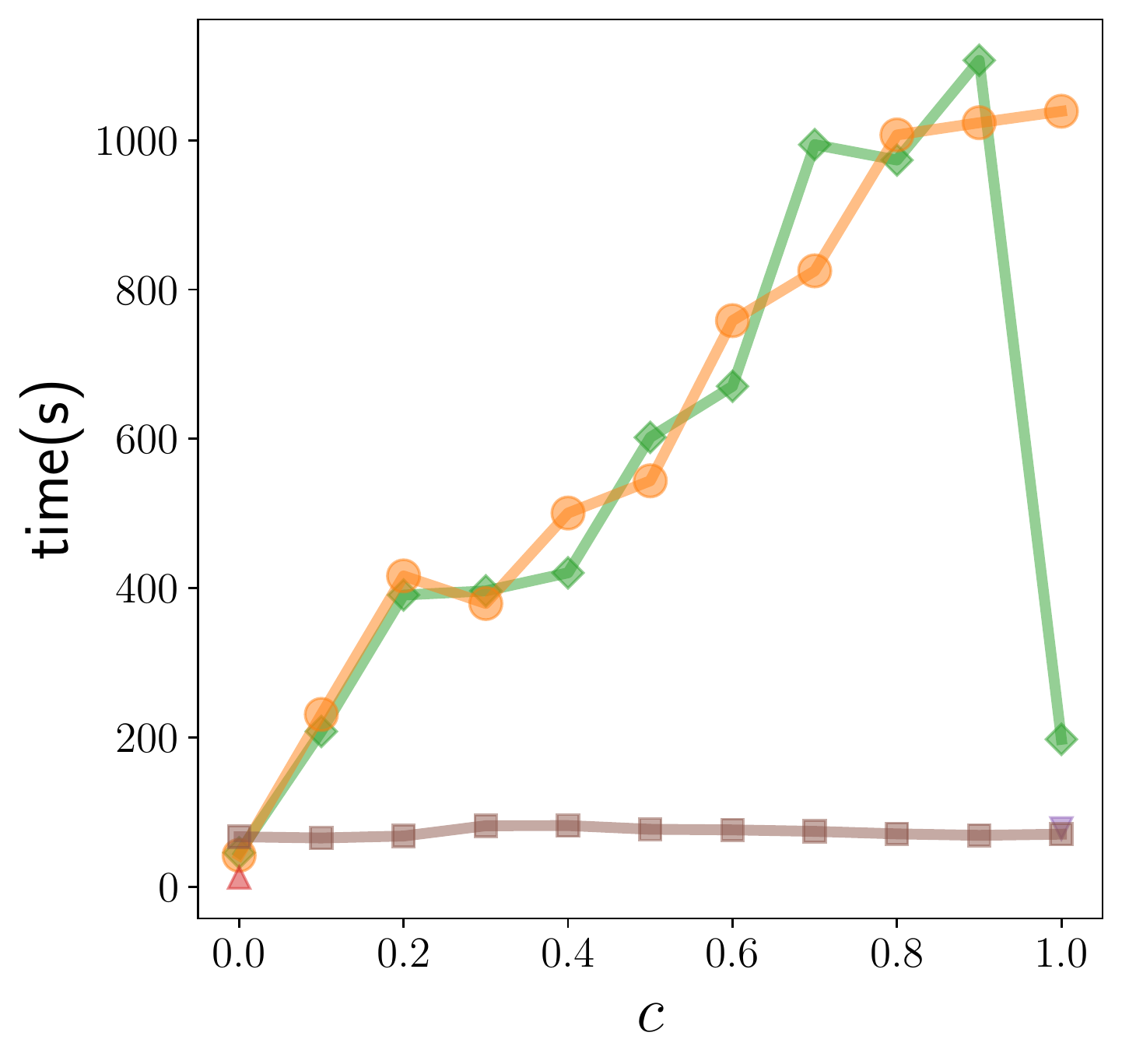}
}
\end{subfloat}
\caption{Results with $p=1$ for the last four graphs.}\label{fig:result_2}
\end{figure*}

\begin{figure*}
\begin{subfloat}[ca-GrQc]{
\includegraphics[width=0.22\textwidth]{./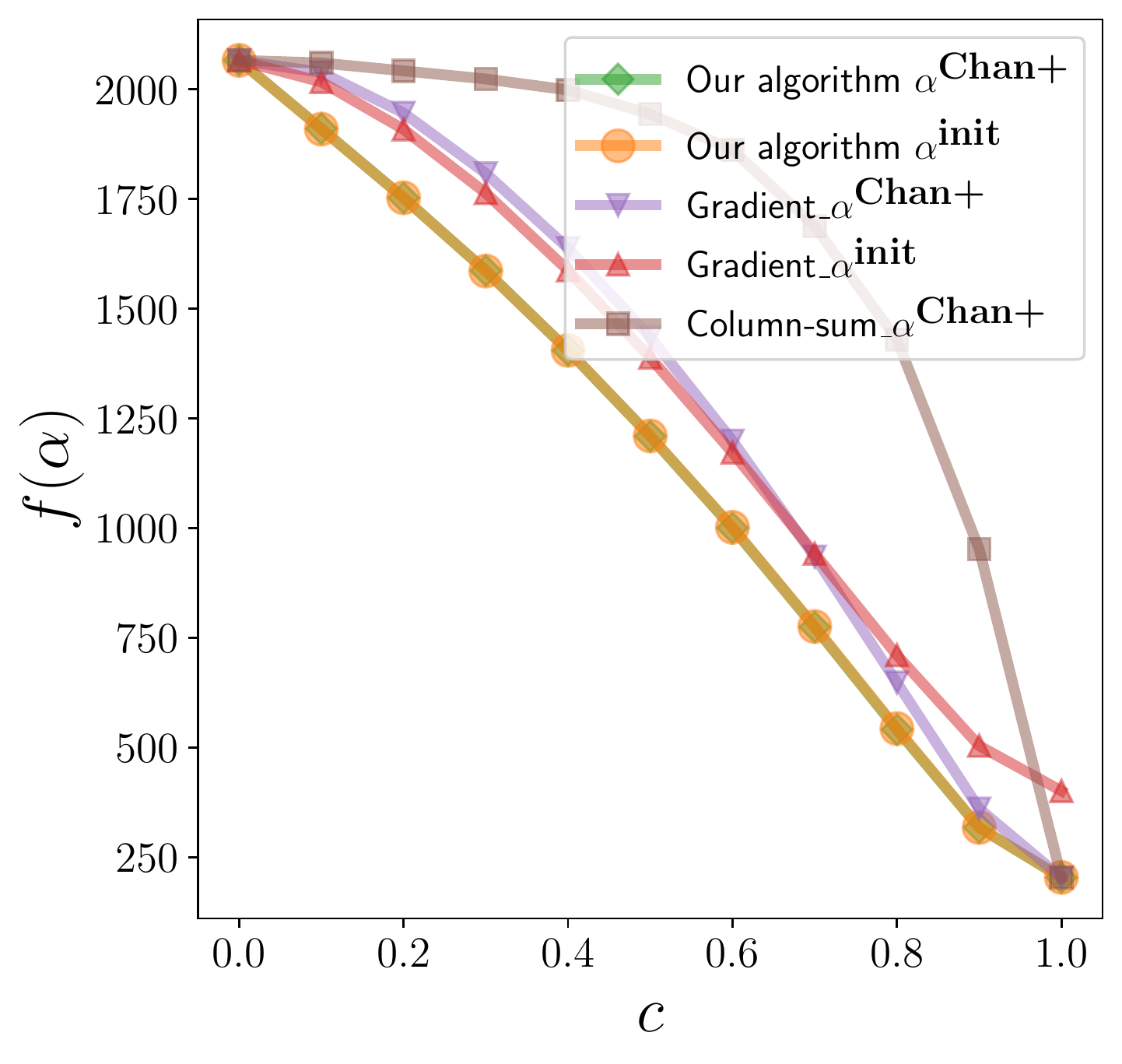}
\includegraphics[width=0.215\textwidth]{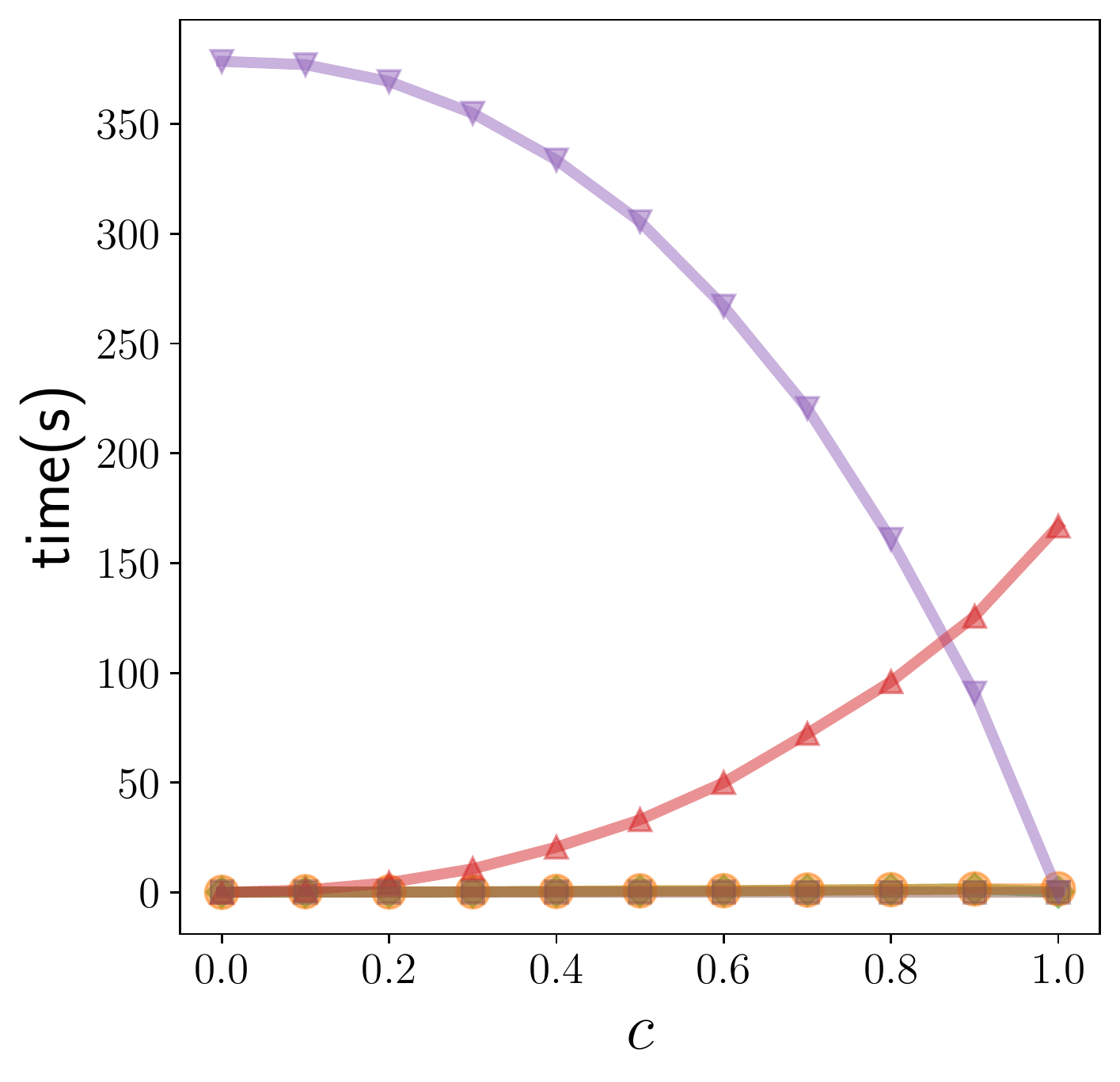}
}
\end{subfloat}
\begin{subfloat}[com-Youtube]{
\includegraphics[width=0.22\textwidth]{./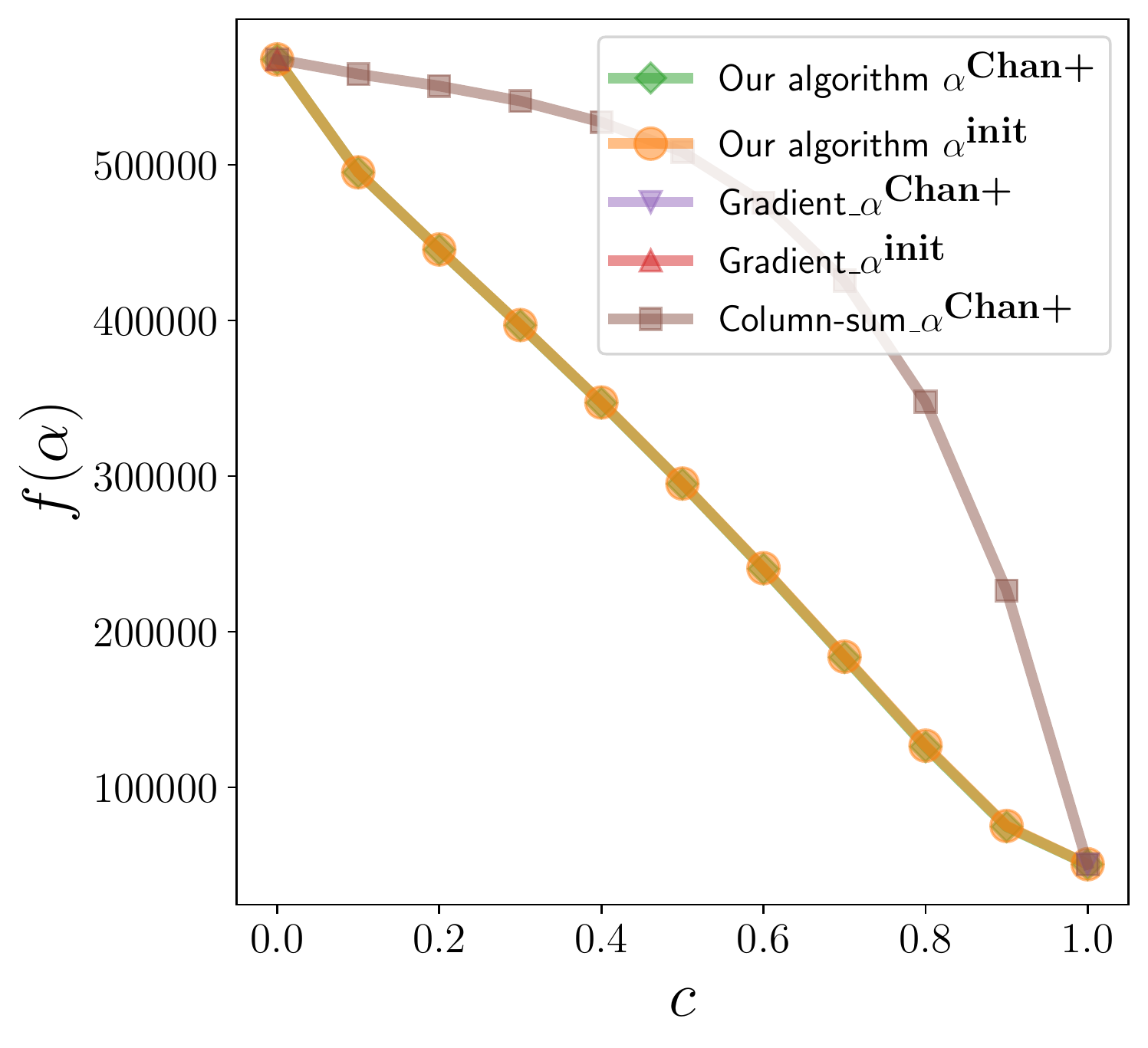}
\includegraphics[width=0.212\textwidth]{./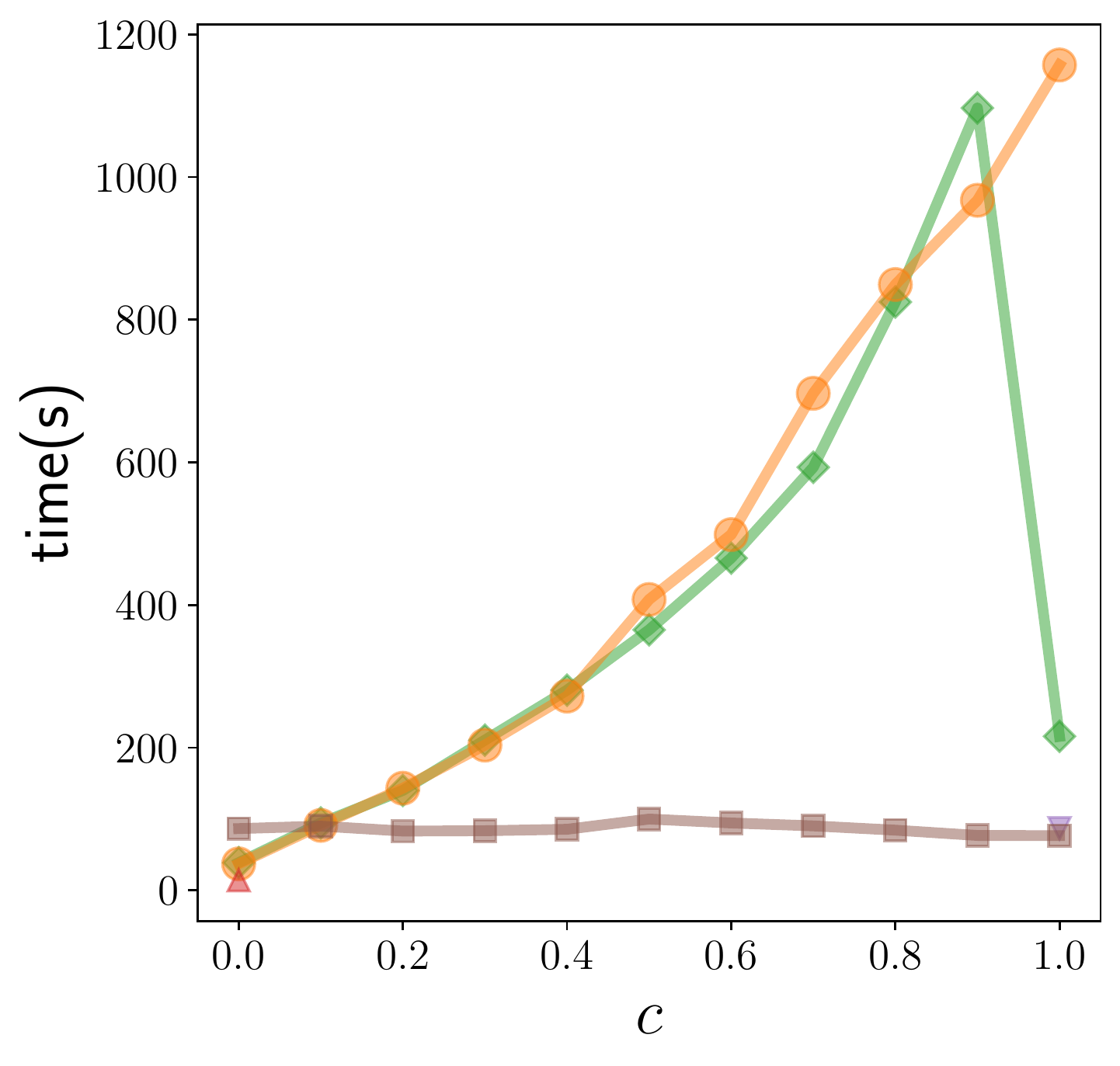}
}
\end{subfloat}
\caption{Results with $p=2$ for \textsf{ca-GrQc} and \textsf{com-Youtube}.}\label{fig:result_3}
\end{figure*}

\subsection{Results}

The results with $p=1$ are shown in Figures~\ref{fig:result_1} and \ref{fig:result_2}, 
%The results are summarized in Figures~\ref{fig:result_1} and \ref{fig:result_2}, 
where for each graph, the left figure shows the results of the quality of solutions while the right figure gives the results of the running time. 
As the instance generation procedure involves randomness, 
we generate ten instances for each graph, execute algorithms for each realization with each $c=0.0,0.1\dots, 1.0$, 
and plot the average value (in terms of both the objective function value and the running time). %as the final results. 

The time limit was set to 1,800 seconds. 
The running time of each algorithm includes that for computing an initial solution. 
%(especially $\bm{\alpha}^\text{Chan+}$ and $\text{proj}_\mathcal{C}(\bm{\alpha}^\text{Chan+})$). 
If an algorithm exceeds the time limit for some realization for $G$ with some $c$, we did not plot the corresponding point at $c$. 
To save the time of experiments (in a reasonable way), we employed the following scheme for the executions of \textsf{Gradient\_$\bm{\alpha}^\mathrm{Chan+}$} and \textsf{Gradient\_$\bm{\alpha}^\mathrm{init}$}: 
As for \textsf{Gradient\_$\bm{\alpha}^\mathrm{Chan+}$}, we execute the algorithm for $c=1.0,0.9,\dots, 0.0$ in this order, 
and once we cannot plot the point for some $c$ (due to the time limit), we do not execute the algorithm for the following values of $c$. %smaller values of $c$. 
As for \textsf{Gradient\_$\bm{\alpha}^\mathrm{init}$}, we execute it in the opposite order of $c$ and apply the same rule. 
%and once we cannot plot the point for some $c$, we did not execute the algorithm for larger values of $c$. 

As can be seen in the figures, our proposed algorithm outperforms the baseline methods, irrespective of the choice of the initial solution. 
In terms of the quality of solutions, \textsf{Gradient\_$\bm{\alpha}^\mathrm{Chan+}$} has a comparable performance with our algorithm. 
However, due to its time complexity, the algorithm is not applicable to large instances. 
In fact, the algorithm exceeds the time limit for \textsf{ca-HepPh} and larger instances particularly for relatively small values of $c$. 
Although \textsf{Gradient\_$\bm{\alpha}^\mathrm{init}$} has a comparable performance with ours for small values of $c$, 
the performance becomes worse as the value of $c$ increases. 
Moreover, the algorithm has a similar issue as that of \textsf{Gradient\_$\bm{\alpha}^\mathrm{Chan+}$} in terms of the running time. 
\textsf{Column-sum\_$\bm{\alpha}^\mathrm{Chan+}$} is much faster than our proposed algorithms; 
it runs in less than 100 seconds even for the largest instance \textsf{com-Youtube}. 
However, the performance in terms of the quality of solution is much worse than ours and even the other two baselines. 
Our algorithm is reasonably applicable to the instances where there are millions of agents. 

Finally, the results with $p=2$ are depicted in Figure~\ref{fig:result_3}. 
Due to space limitations, we present the results only for \textsf{ca-GrQc}, the largest instance for which all algorithms ran in the time limit, 
and \textsf{com-Youtube}, the largest instance we used. 
As can be seen, the trend of the results is similar to that for the case of $p=1$; our proposed algorithm outperforms the baseline methods. 
Moreover, the difference between the performances (in terms of the quality of solutions) of our algorithm and the two baselines, 
\textsf{Gradient\_$\bm{\alpha}^\mathrm{Chan+}$} and \textsf{Gradient\_$\bm{\alpha}^\mathrm{init}$}, 
is more pronounced, compared with the case of $p=1$. 

\section{Conclusion}\label{sec:conclusion}
In this paper, we have studied an opinion optimization model that is able to limit the amount of changes of the susceptibility to persuasion in various forms. 
For our general model, we have designed a projected gradient method that is applicable to the case where there are millions of agents. 
Computational experiments using a variety of real-world social networks demonstrate the effectiveness of our proposed algorithm. 

There are several future directions. 
The most interesting one is to design a more scalable algorithm than ours. 
Is it possible to design a stochastic algorithm that requires the computation time just sublinear in the number of agents  
in each iteration?
%Is it possible to design a stochastic algorithm requiring the computation time just sublinear in $|V|$ and $|E|$ in each iteration?
Another direction is to extend our algorithm and analysis to the case of $p= 0$, where $\mathcal{C}$ is no longer convex. 
Finally, investigating the computational complexity of our model with general $p\geq 0$ is also interesting, 
as we only know the NP-hardness for $p=0$ and $1$~\cite{Abebe+18,Chan+21}. 

%Projected gradient methods have been applied to some problems whose feasible regions are non-convex
%due to sparsity constraints such as $\ell_0$-norm; % and smoothly clipped absolute deviation (SCAD);
%see e.g., \cite{Blumensath2008,wen2018IEEE}.
%Even for non-convex constrained cases,
%the convergence behavior of first-order methods is discussed
%using the celebrated Kurdyka–Łojasiewicz (KL) property as demonstrated in \cite{Attouch10MOR} %Theorem 3.4,
%and the discussion is also applicable to projected gradient methods for such noncovex constrained cases.
%While the convergence to a stationary point is ensured, the local convergence speed is shown only for restricted applications as in \cite{LiPong18}.
%\memo{今回の目的関数は，local conv speedが示されているクラスの問題では>ないことを記す？}
%

\begin{acks}
%The authors would like to thank the anonymous reviewers for their valuable suggestions. 
The authors thank Takayuki Okuno for helpful discussions on asymptotic convergence.
This work was supported by JSPS KAKENHI Grant Numbers 17H01699, 19H04069, and 19K20218. 
\end{acks}

%\clearpage
\bibliographystyle{ACM-Reference-Format}
\bibliography{ref}

\end{document}